\def\ps@headings{%
\def\@oddhead{\mbox{}\scriptsize\rightmark \hfil \thepage}%
\def\@evenhead{\scriptsize\thepage \hfil \leftmark\mbox{}}%
\def\@oddfoot{}%
\def\@evenfoot{}}
\newcommand{\ignore}[1]{}
\newtheorem{definition}{Definition}
\newtheorem{theorem}{Theorem}
\newtheorem{lemma}{Lemma}
\newtheorem{proposition}{Proposition}
\newcommand{\mf}{\mathbf}
\begin{document}
\def\nat{\mathbb{N}}
\def\ff{\mathbb{F}}

\title{On the Security of Key Extraction from Measuring Physical Quantities}

\author{\IEEEauthorblockN{Matt Edman\IEEEauthorrefmark{4},
Aggelos Kiayias\IEEEauthorrefmark{1},\IEEEauthorrefmark{2},
Qiang Tang\IEEEauthorrefmark{3}
%\IEEEauthorrefmark{1},\IEEEauthorrefmark{2},
and B{\"u}lent Yener\IEEEauthorrefmark{4}}

\IEEEauthorblockA{\IEEEauthorrefmark{1}National and Kapodistrian University of Athens, Greece}\\
\IEEEauthorblockA{\IEEEauthorrefmark{2}University of Connecticut, U.S.A}\\
\IEEEauthorblockA{\IEEEauthorrefmark{3}Cornell University, U.S.A}\\
\IEEEauthorblockA{\IEEEauthorrefmark{4}Rensselaer Polytechnic Institute, U.S.A}
% <-this % stops an unwanted space
%\thanks{hello}}
\thanks{This work is supported by NSF grant 0831304, and it was mostly done while the third author was at the University of Connecticut.} }

\maketitle

\ignore{%%%%%%%%%%
\author{Matt Edman\inst{1}, Aggelos Kiayias\inst{2}, Qiang Tang\inst{2} and B{\"u}lent Yener\inst{1} }
\institute{Rensselaer Polytechnic Institute
\\\{edmanm2,yener\}@cs.rpi.edu\and University of Connecticut \& University of
Athens\\\{aggelos,qiang\}@cse.uconn.edu} 

While ad hoc lower bound
assumptions on the
conditional entropy  of Alice and
Bob's measurements given the view of the
adversary  are sufficient to argue the security of the key, it is very hard to
{\em experimentally} verify that
 such assumptions are indeed true.
}%%%%

\begin{abstract}
Key extraction via measuring a physical quantity
is a  class of information theoretic
key exchange protocols that rely on the physical characteristics of
the communication channel, to enable the computation of a shared key
by two parties that share no prior secret information. The
key is supposed to be information theoretically hidden to an
eavesdropper. Despite the recent surge of research activity in the
area, concrete claims about the security of the protocols typically
rely on channel abstractions that are not fully experimentally
substantiated. In this work, we  propose a novel methodology
for the {\em experimental} security analysis of these
protocols. The crux of our methodology is  a falsifiable
channel abstraction that is accompanied by an efficient
experimental approximation algorithm of the {\em conditional min-entropy}
available to the parties given the view of the eavesdropper.

We  focus on the signal strength between two wirelessly
communicating transceivers as the measured quantity and we use an
experimental setup to compute the conditional min-entropy of the
channel given the view of the attacker which we find to be linearly
increasing. Armed with this understanding of the channel, we
showcase the methodology by providing  a general protocol  for key
extraction in this setting that  is shown to be secure for a
concrete parameter selection. In this way we provide a 
comprehensively  analyzed wireless key extraction protocol that is
demonstrably secure against passive adversaries assuming our falsifiable
channel abstraction. Our  
use of hidden Markov models as the channel model and a dynamic
programming approach to approximate conditional min-entropy might be of independent interest, while
other possible instantiations of our methodology can be feasible and may be motivated by this work.
%Our approach entails  a number of features that
%include a new quantization algorithm for transforming signal
%measurements to bitstrings that utilizes low distortion embeddings
%in metric spaces, as well as an experimental approach for estimating
%the conditional min entropy  that remains in the channel in the
%presence of an eavesdropper which uses the Viterbi algorithm and the
%forward algorithm for hidden Markov models. Among others, we also demonstrate
%experimentally for the first time the linear growth of conditional min-entropy
%of  physical layer measurements given the view of the passive eavesdropper.
\end{abstract}

\section{Introduction}
\label{sec:introduction}

Key extraction between two parties, Alice and Bob, by measuring
a physical quantity is based on two basic premises. First, due to
the   physical properties  of
the  quantity measured, Alice and Bob obtain highly
correlated measurements. Second, the   measurements obtained  by the
eavesdropper are weakly correlated. This hypothesized gap of
the correlation of measurements between Alice and Bob and the
correlation of the eavesdropper's measurements opens the door for a
key exchange mechanism that would have to perform an information
reconciliation~\cite{BS94} and privacy amplification~\cite{BBR88}
step to enable the calculation of a secret key that is almost
independent of the adversary's view.

%Nevertheless, there is an important challenge that distinguishes
%physical-layer key extraction from theoretical information
%theoretic key exchange or  quantum key exchange protocols.
%%
In the literature, one can distinguish two classes of such
protocols. The first class is practice-oriented protocols in which
the security of the key follows directly from certain strong assumptions or is based (at best)  on statistical tests.
The second class  is   theoretical protocols  in which broad channel
abstractions are made and then these are formally shown to imply a
correlation gap between the adversary and the two parties, thus
showing information theoretic security. Nevertheless, despite that
security is proven formally, for a real-world implementation, a leap
of faith is still required to accept the fact that the channel
abstraction is indeed capturing the real-world setting of the
adversary. The only exception is  quantum key-exchange, where there
are results from quantum mechanics such as the no-cloning theorem
that do provide the underlying formal basis justifying the
connection between the channel abstraction and the real-world.
%(even though
%there are also doubts for this as well \cite{anderson-brady}).
The
lack of   strong general arguments in
key extraction from measuring physical quantities
means that any deviation
of the real-world adversary setting  from the assumed channel
abstraction will lead to a security breakdown and thus to a dubious
security status for the key extraction protocol.

In this work we take a first step towards addressing  this fundamental problem 
%of bridging the
%gap between experimental analysis and theoretical security arguments,
by providing a novel formal security methodology that enables us
to argue experimentally about the security of such protocols. In
a nutshell, we put forth a general framework that  (i)
enable us to mathematically estimate the conditional entropy in the key extracted from physical measurements; (ii)
the security arguments can be tied to a specific real-world
setting through falsifiable (but not oversimplified) assumptions.
% that are
%verifiable through experiments.

\ignore{%%%%%%%%%%%%%%%passive-active model
The authors of~\cite{ESWM12} also considered how to deploy a
man-in-the-middle attack to recover part of the key generated from
the wireless signals, and~\cite{ZAS12,Clark12} discussed the problem
of extracting a key from wireless signal in an adaptive model where
the adversary is active, she not only eavesdrops the communication
but also tries to inject signals or to do anything she is able to
influence the communication. We would like to remark that although
we consider security in the passive model, our main objective is to
derive a framework of estimating the conditional entropy contained
in the physical quantities received by the communicating parties,
this problem will still be fundamental when addressing the security
in the adaptive model, and one may use our framework as a
steppingstone to study this problem in the adaptive model;
furthermore, we think it still meaningful to resolve this central
issue in the passive model, as an example of bridging the gap
between theoretical argument and experimental evaluation w.r.t. the
security of extracting keys from physical quantities.
}%%%%%%%%%%%%%%%%
%This enables to provide concrete security arguments for key extraction
%that are experimentally verifiable.
%As part of our results we also provide a new general protocol for key
%extraction from physical measurements which is the basis for our security
%arguments. Finally, we apply our methodology and we
%obtain a  secure key extraction protocol that relies on
%measurements of physical-layer properties in wireless communication.

%We note that there are substantial benefits to be gained from key
%extraction protocols from measuring physical quantities, if their
%security can be properly quantified in a given setting.
%%Indeed, such
%%protocols, contrary to quantum key-exchange or other
%%information-theoretic  key exchange systems that rely on other
%%assumptions, can be implemented very efficiently as a  functionality
%%of the physical communication layer; in this way,
%Two communicating transceivers can extract a key at effectively no
%amortized cost by simply engaging in regular operation, and
%measuring the physical quantity as a byproduct. (e.g, in the
%wireless setting, communicating in plaintext and measuring the
%signal strength of the messages exchanged (independently of
%content)).

%Also, our security argument is also applicable in other key
%extraction protocol from measuring a sequence of physical
%quantities.

% !TEX root =  wir03.tex

\subsection{Related Work and Our Results}
\label{sec:related-work}

Information-theoretic treatment of secure key exchange was initiated
by Wyner~\cite{Wyner} in the wiretap model and it was later studied by Maurer \cite{Maurer93} and Ahlswede, Csiszar \cite{AC93} in the common random source model. These works dealt with the feasibility of
key exchange assuming a non-zero ``equivocation-rate'', a measure
expressing the  uncertainty of the eavesdropper about the parties'
communication.  Beyond generic  schemes, for specific physical
quantities, there has already been  a series of theoretical studies,
in quantum key exchange \cite{BB84}, and wireless key exchange
focusing on the ``secrecy capacity'' between wirelessly
communicating
transceivers, (e.g.,~\cite{BBDM08}, following the wiretap channel and \cite{BR06,KDW11,GLE08,LLP09,Clark12, HHY95, HSHC96}, following the shared randomness channel.) These
works, being conceptual only \cite{HHY95, HSHC96}, or information theoretic in nature, contain no
experimental justification of the channel model they utilize: for
the quantum ones, the underlying physics justify the model but the
wireless ones have no real-world justification, most of them relying on
the assumption that the signal measurements are independent across
time and thus any existing non-zero equivocation rate can be  magnified for a suitable number of
transmissions straightforwardly.

Considering works that were more experimental in nature, there were
several practical algorithms suggested for wireless key extraction
~\cite{TM01,JPKPCK09,SKMY07,PCJSK10,YMRSTM10,LR05,WTS07,AHOKS05,MTMYR08,WSKR11,WZM11}.
A common characteristic of these works is that significant attention
is given to demonstrate correctness (i.e., that Alice and Bob
calculate the same key), and efficiency; much less formal analysis (as we detail below)
is performed to demonstrate security. 
% that the eavesdropper
%cannot extract {\em any} information about the key). 
More specifically, the security of many of these works, e.g., \cite{MTMYR08, JPKPCK09, PCJSK10, YMRSTM10}, directly relied on the assumption that if the adversary is a wavelength away from the communicating parties, the adversary's observation is independent of the parties' measurements. 
%This assumption is strong and 
%also can not be justified by the experiments.
%actually some real world attacks were demonstrated (see below \cite{EKY11,DLMA10}).
%in~\cite{TM01,WTS07}, no proper security analysis is provided whatsoever, 
Some other works made some further steps to analyze security. For instance, \cite{MTMYR08,WSKR11} consider performing
randomness tests but these are insufficient as security guarantees
(such ``random'' sequences are not necessarily unpredictable).
In~\cite{LR05} a type of attacks based on blind deconvolution is
considered and it is argued experimentally that such attacks are
unlikely to apply; however, this does not preclude other types of
attacks. Works such as~\cite{AHOKS05,SKMY07} provided informal
security arguments based on calculating the correlation between the
eavesdropper's measurements and the parties' measurements. However, such correlation calculation can not guarantee  the amount of (conditional) entropy contained in the extracted key. 
%Unfortunately, correlation measures used such as the Pearson
%correlation coefficient are insufficient for arguing about security.
%Indeed, even if no linear correlation exists, this does not preclude
%the existence of more complex non-linear correlation functions
%that can be taken advantage of in an attack and the existence of
%such functions cannot be precluded by these arguments.

In fact, the lack of properly rigorous security claims can lead to
(partial) key recovery attacks as demonstrated
in~\cite{EKY11,DLMA10} that examined known protocols in a specific
deployment. These works exemplified the fact that even though
existing analysis has demonstrated successfully that the signal
observed by the two parties has sufficient entropy it remains open
whether the {\em conditional entropy} on the eavesdropper's view is
non-zero in a specific real-world deployment.

A number of techniques for reconciling the errors between the two
communicating parties have been shown in the literature. For real valued
measurements (as in the wireless key exchange setting),
a popular
approach employs a ``thresholding''
methodology~\cite{SKMY07,MTMYR08} where the signal is transformed to
a bitstring using only the level of measurements where, for example,
deep fades are observed. In such settings it can be shown that the
errors are so few that reconciliation becomes easy. Unfortunately,
such reconciliation requires interaction that may result in
potentially   zeroing the conditional entropy, a fact that
experimentally remains open. Other approaches suffer from similar
problems, such as~\cite{JPKPCK09} that relies on the Cascade
protocol or~\cite{AHOKS05} that uses specialized antennas.

We conclude that information theoretic security in all
these previous works relies on whether the channel
abstraction fits the real-world adversary setting
and whether any
additional reconciliation
interaction performed does not cut from the conditional entropy
too much.
While ad hoc lower bound
assumptions on the
conditional entropy  of Alice and
Bob's measurements given the view of the
adversary  are sufficient to argue the security of the key, it is very hard to
{\em experimentally} verify that
 such assumptions are indeed true. Indeed, in order to apply the textbook
calculation of conditional entropy, one would have to carry
experiments an immense number of times. Even worse, one  can
formally prove  that an exponential number of samples are necessary
for approximating conditional entropy in the general case (this can
be derived from \cite{valiant11}).
%\footnote{It is easy to construct two
%pairs distributions $(X,Y), (X',Y')$ so that $H(X|Y)=0$ and $H(X'|Y')$ is maximal, while
%the pairs are indistinguishable given any polynomial number of samples.}.

In this work we address the above problems with the following
contributions.
\begin{itemize}
\item First, we introduce a methodology to experimentally argue about the security
of  key exchange protocols that are based on measuring real-valued
physical quantities in the passive model: The crux of our
methodology is a falsifiable channel abstraction that is accompanied
by an efficient experimental approximation algorithm of the
conditional min-entropy of the two parties given the view of the
eavesdropper.

Specifically, we  model  measurements and adversarial observations
via a hidden Markov model (HMM) and we   utilize a dynamic
programming approach to estimate the conditional min-entropy \footnote{There are works studying the capacity of  Markovian channels \cite{GV96}, or using HMM to model the package loss process to infer channel parameters \cite{SV01}.  To the best of our knowledge, our work, for the first time, uses HMM to model channels subject to adversarial eavesdropping,   for the purpose of deriving  conditional entropy lower bounds for security guarantees.} : we
achieve that through a combination of the Viterbi
algorithm~\cite{viterbi67} and the forward algorithm~\cite{BE67}.
This algorithmic approach is beneficial as it allows with only a
polynomial number of experiments  in the size of the key to argue
about the conditional min-entropy {\em without} oversimplifying the
channel abstraction (e.g., assuming measurements are mutually independent across time~\cite{BR06,KDW11,GLE08,BBDM08,LLP09,Clark12}).\footnote{Note that we are note claiming HMM exactly models how channel behaves, instead, allowing correlation across signals is an important first step towards the real world model departing from those idealized assumptions.} Furthermore, the Markovian nature of the
channel is falsifiable and is possible to verify   it experimentally. Given the conditional min-entropy bound,  we eventually show
security through the calculation of the min-entropy loss during the
standard protocol steps of quantization,  privacy amplification and
information reconciliation. We showcase  the result by providing a
general protocol following these three steps that relies on
``low-distortion'' embeddings~\cite{GKL03}, secure sketches,
\cite{DRS04} and randomness extractors~\cite{NZ96}. We note that
the HMM abstraction is only one out of many  ways to instantiate our
methodology for arguing about security and there could be others
that may be discovered motivated by our work. 
%Also, we do not claim that our model of the Markovian channel holds unconditionally, rather it is strictly weaker then previous assumptions, and approximate the actual channel better xxxxxx).

\item Second, we apply the methodology we put forth above
to the setting of wireless key exchange, where
the signal strength is the target physical quantity that the two parties measure.
We designed an experimental configuration that enabled us to measure the conditional
min entropy as well as the correctness of channel abstraction our methodology utilizes.
Our experimental configuration included a robotic mobile device equipped with various
transceivers that performed  measurements of signal strength over long periods of time.
Using the data and our methodology we calculated the  conditional min-entropy  and error
rate  over time, showing the feasibility of secure key exchange for the type of adversaries
used in our experimental setup. Our experimental results enabled us to identify the
concrete parameters needed for generating a secure key of a specific length and also
predict how these parameters would need to change for generating longer keys.
\end{itemize}
%
%{\bf Organization. } In section~\ref{sec:key-generation}, we present
%a conditional entropy learner for Markovian processes,  and a general
%protocol for key extraction from measuring physical quantities. Then
%in section~\ref{sec:wireless-protocol}, we apply the methodology to
%the setting of wireless key exchange,  experimentally determine
%the relevant parameters and instantiate the key extraction protocol.
%Finally, we experimentally justify all the assumptions needed about the channel   in the appendix.

We note that our work focuses on demonstrating feasibility of analyzing security  for physical layer key extraction from experimentally falsifiable assumptions; while our  instantiation is practical, we leave for future work the further optimization of  efficiency within  our framework. 
Finally we stress that a more ambitious objective would be to provide a framework for security
in the active adversarial model; indeed, 
in  key extraction protocols  certain types of active attacks have been demonstrated, e.g., \cite{ESWM12,ZAS12,Clark12}. While this is beyond the scope of the present work, 
a framework such as ours that provides a way to 
lower bound the conditional entropy available to the two transceivers 
can be a fundamental intermediate step towards a formal treatment  of security
in the active model. 

%Resolving this central issue in the passive model is an important first step, in that one may use our framework as a
%steppingstone to experimentally  argue  security in the adaptive model. %furthermore, we think it still meaningful to resolve this central
%issue in the passive model, 
%as a first step towards experimentally arguing the security of key extraction protocols from measuring physical quantities .
%bridging the gap
%between theoretical argument and experimental evaluation w.r.t. the
%security of extracting keys from physical quantities.

\section{Background of Key Extraction from Measuring Physical Quantities}
\label{sec:key-generation}

\subsection{Definitions}

The problem of key extraction from physical measurements can be
abstracted as a game between two players, Alice and Bob, who have
access to their own separate measurement devices for a certain
physical quantity. The physical quantity itself is generated through
the actions of Alice, Bob and potentially also the adversary, Eve
and for the purpose of this section will remain purposefully
undetermined (we will detail a specific instantiation in
section~\ref{sec:wireless-protocol}).

We will use  $X_1,\ldots,X_n$ to denote the measurements obtained by
Alice, and $X'_1,\ldots, X'_n$ the measurements obtained by Bob.
Finally the adversary is also able to obtain measurements of the
same physical quantity denoted by $Y_1,\ldots,Y_n$. The value $n$ is
arbitrary and the algorithm for extracting the key should allow any
choice for this value. The mechanism for determining the appropriate
value of $n$, given a certain level of security that needs to be
attained is, in fact, an important part of the   protocol design
problem. Beyond access to the measurements, we assume also that the
parties, Alice and Bob have the ability to engage in ``public
discussion.'', i.e., they can utilize an authenticated channel. The
implementation of this channel is separate and orthogonal to our
objectives.

Given the above,   a
$(l,\epsilon_{c},\epsilon_{u})$ key generation system is a protocol between
Alice and Bob running
over an authenticated channel so that each party has access to
their physical measurement devices and satisfies
 these properties:
\begin{itemize}
\item (Correctness) Alice and Bob both output the same $l$-bit
key with probability $1-\epsilon_{c}$.
\item (Security) Conditional to the view of any adversary
the key calculated by the two parties has statistical distance
from the uniform distribution over
$\{0,1\}^l$ at most $\epsilon_{u}$.
\end{itemize}

\ignore{%%%%%%%%%%%%%%%%%%%%%%%%%%%%%%%%%%%%%%%%%%%%%%%%%%%
All knowledge the eavesdropper can get is represented using random
variable $R_{C}$. Jumping ahead, in our setting, $R_C$ would be
derived from $C$'s raw data of signal observations, and all
interaction between $A$ and $B$. It follows that the triple of
random variables $(\rho_{A},\rho_{B},R_{C})$ is distributed
according to a joint distribution that is based on the properties of
the channel, as well as assumptions about the environment that
affect the wireless transmission.

We require the following three properties from a protocol for key
extraction: (i) {\it correctness}, which ensures that both parties
end up with the same key with overwhelming probability, (ii) {\it
uniformity}, which ensures that the resulting keys are close to
uniformly random, and (iii) {\it security}, which ensures that no
adversary can compute with substantial probability an arbitrarily
chosen function of a key. For detailed definition we refer to
\cite{SKMY07}.

\begin{definition}
\label{def:kgs}
A $(n,l,\epsilon_{c},\epsilon_{u},\epsilon_{s})$ key generation
system is a pair  $(\texttt{KG, Env})$, where $\texttt{Env}$ is a
product probability distribution $\langle \rho_{A},\rho_{B},\rho_{C}
\rangle$ over $\{0,1\}^{n}\times\{0,1\}^{n}\times \{0,1\}^*$  and $\texttt{KG}$ is a two-party protocol that returns
a private output in $\{0,1\}^{l}$ for both parties. Both
$\texttt{Env}$ and $\texttt{KG}$ are polynomial-time bounded in $n$
and satisfy the following three properties:

  \begin{itemize}
    \item{\textbf{Correctness}}: If $\langle \rho_{A},\rho_{B},\rho_{C}
    \rangle$ is a random variable distributed according to $\texttt{Env}$,
    it holds that the probability that both players return the same
    output in the protocol $\texttt{KG}$ is at least $1-\epsilon_{c}$. Note
    that correctness does not take into account the random variable
    $\rho_{C}$.
    \item{\textbf{Uniformity}}: If $\langle
    \rho_{A},\rho_{B},\rho_{C}\rangle$ is distributed according to
    $\texttt{Env}$, it holds that the statistical distance of the output key
    of party $A$ from the uniform distribution over $\{0,1\}^{l}$
    is at most $\epsilon_{u}$ conditioned on $\rho_{C}$.
    \item{\textbf{Security}}: Given any probabilistic polynomial time (PPT)
    algorithm $\mathcal{A}$, there is a PPT algorithm $\mathcal{A}'$ that
    satisfies the following for any PPT function $f$:
    $$|Pr[\mathcal{A}(T,\rho_C)=f(key)]-Pr[\mathcal{A}'(1^n)=f(key)]|\leq \epsilon_{s}$$
    where $key$ is the key of
    the party that corresponds to transcript $T$. Note that both $T$ and
    $key$ are functions of $\langle\rho_A, \rho_B, \rho_C\rangle$ as determined by $\texttt{KG}$.
  \end{itemize}
\end{definition}

We note that the above setting captures the scenario where the
eavesdropper $C$ is not actively interfering with the protocol
$\texttt{KG}$. While it is worthwhile to consider various active
attacker scenarios against $\texttt{KG}$ we stress that even the
passive eavesdropper case is non-trivial. Arguing how $\langle
\rho_A, \rho_B, \rho_C\rangle $ is created by the parties and that
the eavesdropper $C$ is still incapable of predicting the joint key
is the major focus of the upcoming sections.

}%%%%%%%%%%%%%%%%%%%%%%%%%%%%%%%%%%%%%%%%%%%%%%%%%%%%%%%%%%%%%%%%%

%%!TEX root = wir03.tex

%\subsection{Learning Conditional Min-Entropy}

The most important consideration in the process of describing how to
extract the key from the physical measurements performed by Alice
and Bob is determining the amount of uncertainty that exists in the
measurements conditioned on the view of the adversary -- if no
sufficient uncertainty exists then they cannot be expected to
complete the key extraction securely. We first recall the standard
definitions of metrics for uncertainty: min-entropy and conditional
min-entropy.
%, then  that starts from a experimentally
%falsifiable physical assumption, and argue the security through a
%logically sound mathematical reasoning.
%Before we propose the key extraction protocol, we first examine how
%to estimate how much uncertainty are there in the physical
%measurements received by the players when there is also an adversary
%eavesdropping partial information about the measurements. The
%uncertainly will form a basis for any later security argument.
%Instead of relying on the theoretical assumptions, such as that the
%eavesdropped measurements are independent of the measurements which
%are unjustifiable in realistic settings, we would like to move one
%step forward to bridge the gap between theoretical security argument
%and experimental security. We introduce the problem of learning the
%conditional min-entropy given access to the measurement devices of
%one of the players and the adversary. We define min-entropy and
%conditional min-entropy first.

\begin{definition}
\label{def:ave-entropy}
  The min-entropy $\textbf{H}_{\infty}(A)$ of a random variable A is defined
  as $\textbf{H}_{\infty}(A) = -\log(\max_{a}\Pr[A=a])$. The conditional min-entropy
  of $A$ on the event that another random variable $B$ equals a specific value $b$ is
  $\textbf{H}_{\infty}(A|B=b)=-\log(max_{a}\Pr[A=a|B=b])$.
  Further, we define the (average-case) conditional  min-entropy of $A$ given $B$ as
  $\widetilde{\textbf{H}}_{\infty}(A|B) =
    -\log(E_{b\leftarrow B}[2^{-\textbf{H}_{\infty}(A|B=b)}])$ \footnote{See \cite{DRS04} for detailed justifications of the definition of average-case conditional min-entropy.}.
 \end{definition}

\ignore{%%%
The core of our methodology is a channel abstraction that is
falsifiable and is equipped with an efficient conditional
min-entropy approximation algorithm from experimental data.  Using
our methodology one can argue about  the security of  key-extraction
on real-world configurations. See
Section~\ref{subsec:experiment-setup} for the specific set of
experimental configurations used in the present analysis.

We proceed in  three steps. First, using experimental data we show how to calculate
a lower bound on the conditional min-entropy that is present in
Alice's measurements. Our methodology utilizes an abstraction of the
channel used by Alice and Bob as an HMM (see \cite{Rabiner89} for a survey). Second, we
analyze the entropy loss that occurs in the remaining steps of the
protocol (quantization, information reconciliation, privacy
amplification). Finally, we argue that experimentally the channel
indeed behaves as an HMM.
}%%%%%%%%

% !TEX root = wir03.tex

\subsection{A General Protocol}
\label{sec:generalprotocol}
Next, we will present a general protocol for our key generation
process from measuring physical quantities. It has three basic
steps. First, $A$ and $B$ produce a  sequence of events and their
corresponding measurements and then convert them to bitstrings
denoted by $\rho_{A}$
and $\rho_{B}$, respectively. Subsequently they perform
 information reconciliation and  privacy
amplification. All these techniques are fairly standard,  we choose them properly so that our protocol  enable  us to bound the entropy loss
that takes place  during each step of  the execution of the algorithm and
ensures the proper extraction of the key.

\noindent{\bf Bit Quantization}
\label{subsec:quantization}
%is accomplished by
%a simple {\tt ping}-like protocol that is described in more detail in
%Section~\ref{subsec:experiment-setup}.
Each party has at
its disposal a series of measurements $X_1,\ldots, X_n$.
% where $X_i$ corresponds to the average signal strength that
%was observed for the $i$-th transmission of the {\tt ping} protocol.
%There have been various approaches proposed in previous literature
%to convert (or {\it quantize}) a sequence of physical quantities
%$X_1, X_2, \ldots, X_n$  to a bitstring $b_1, b_2, \ldots, b_N$.
%Without loss of generality we assume that $X_i$ are signed integers.
%Most approaches (like the ones using wireless channel
%measurements)are essentially thresholding algorithms, like the ones
%proposed by Azimi-Sadjadi et al.~\cite{SKMY07}, Mathur et
%al.~\cite{MTMYR08}, Jana et al.~\cite{JPKPCK09} and others. Adaptive
%thresholding aims to reduce the number of bit differences (i.e., the
%Hamming distance) between the bitstrings that are output by two
%communicating parties.
We put forth a quantization approach that
utilizes a low distortion embedding $\tau$ from $\ell_1$-distance
metric space into Hamming distance metric space. The
$\ell_1$-distance (or Manhattan distance) between two vectors
$\textbf{X},\textbf{Y}$ in an $n$ dimensional real vector space is
defined as the sum of the absolute difference between corresponding
coordinates, i.e, $\ell_1(\textbf{X},\textbf{Y})=\sum_{n}|X_i-Y_i|,$
where $\textbf{X}=(X_i,\ldots,X_n),\textbf{Y}=(Y_1,\ldots,Y_n)$.

An embedding is a mapping between two metric spaces. It is said to
have distortion $c$ if and only if the distance is preserved with up
to a multiplicative factor of $c$~\cite{OR05}. Given a low distortion embedding
$\tau$, we not only quantize the physical measurements, but also
prepare two bitstrings to have a small Hamming distance which is
critical for the reconciliation algorithm. Last, it is not hard to
construct a low distortion embedding, one can verify the following
proposition that a simple unary encoding is a good low distortion
embedding.
\begin{proposition}
Suppose $B$ is a finite set of integers, 
%for any $i$, $X_i$ is a random variable over a finite set
%$B\subseteq\mathbb{Z}$ 
and $m =\max \{ |x| \mid x\in B\}$, 
%where $|x|$ 
%and $m$ is the maximum bit length of elements in $B$. 
Consider the
mapping $\tau: B\rightarrow \{0,1\}^{m}$  defined by applying a unary
encoding, i.e, $\tau(x)$ is a bitstring of length $m$, and it is composed of  
%bit represents $sign(x)$, followed by 
$m-|x|$ 
consecutive $0$s followed by $|x|$ consecutive $1$'s.  Then, $\tau$ is an embedding
with distortion at most 1.
\end{proposition}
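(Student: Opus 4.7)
The plan is to show that for all $x,y \in B$ one has $d_H(\tau(x),\tau(y)) \leq |x-y|$, and to read this as ``distortion at most $1$'' in the sense that the Hamming-lifted map does not stretch the $\ell_1$ distance. I would begin by unpacking $\tau(x)$ into two pieces: a leading sign bit $\sigma(x)$, and a length-$m$ magnitude block that is a unary encoding of $|x|$ (a run of $m-|x|$ zeros followed by $|x|$ ones). The Hamming distance between $\tau(x)$ and $\tau(y)$ then splits additively as a sign contribution $[\sigma(x) \neq \sigma(y)]$ plus the number of disagreements in the two unary tails. Because each unary tail is a suffix of $1$'s, the shorter run of trailing ones is always contained in the longer, so the tails disagree in exactly $\bigl||x|-|y|\bigr|$ positions regardless of the signs.

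The proof then proceeds by a short case analysis on signs. When $x$ and $y$ have the same sign, the sign contribution is $0$ and the total Hamming distance is $\bigl||x|-|y|\bigr|$, which equals $|x-y|$; the bound holds with equality. The interesting case is when $x$ and $y$ have opposite signs: the sign bit contributes $1$, and the Hamming distance becomes $1 + \bigl||x|-|y|\bigr|$, while $|x-y| = |x|+|y|$. The required inequality $1 + \bigl||x|-|y|\bigr| \leq |x|+|y|$ is algebraically equivalent to $1 \leq 2\min(|x|,|y|)$, which holds because $B \subseteq \mathbb{Z}$ and in the strict opposite-sign case $|x|,|y| \geq 1$.

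The main obstacle is this opposite-sign case: one must show that the single bit of mismatch introduced by the sign can always be absorbed into the $\ell_1$ slack coming from the magnitude sum, and this is exactly where integrality of $B$ is used. Once the one-dimensional inequality $d_H(\tau(x),\tau(y)) \leq |x-y|$ is established, the claim for vectors follows by applying $\tau$ coordinate-wise and summing over the $n$ positions, since both $d_H$ and $\ell_1$ are additive across coordinates. This yields the claimed distortion bound of at most $1$.
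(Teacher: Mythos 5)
Your argument is correct, and in fact the paper offers no proof of this proposition at all (it merely asserts that ``one can verify'' it), so there is nothing to compare against: your write-up supplies the missing verification. The decomposition of the Hamming distance into the sign-bit contribution plus the unary-tail mismatch, the observation that two strings of the form $0^{m-k}1^{k}$ differ in exactly $\bigl||x|-|y|\bigr|$ positions, and the reduction of the opposite-sign case to $1\leq 2\min(|x|,|y|)$ are all sound, as is the coordinate-wise summation at the end.

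Two small caveats are worth recording. First, you are proving the one-sided bound $d_H(\tau(x),\tau(y))\leq|x-y|$, which is the operationally relevant direction (it is what guarantees that a code correcting $d$ Hamming errors suffices when the measurements are $\ell_1$-close); under the standard two-sided definition of distortion the map is \emph{not} an isometry, since for $x=1$, $y=-1$ the two codewords differ only in the sign bit, so the $\ell_1$ distance $2$ is contracted to Hamming distance $1$. Your reading is the charitable and useful one, but you should state explicitly that you interpret ``distortion at most $1$'' as non-expansion of the $\ell_1$ distance. Second, the opposite-sign case silently assumes $|x|,|y|\geq 1$; if $0\in B$ and $sign(0)$ is fixed to one convention, then pairing $x=0$ with a $y$ of the other sign gives $d_H=1+|y|>|y|=|x-y|$, violating the bound by the lone sign bit. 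This is really a defect of the proposition's statement (the sign of $0$ is left undefined), and it is harmless in the paper's application where all RSSI values are negative and the sign bit is dropped, but a complete proof should either exclude this case or note the convention.
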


\noindent{\bf Reconciliation using Secure Sketches}
\label{subsec:reconciliation} After collecting measurements
and performing the bit quantization step, Alice and Bob possess two
$N$-bit strings $\rho_A,\rho_B$, respectively. For now, we assume
that there exists a $d \in \nat$ such that with overwhelming
probability the Hamming distance satisfies $H(\rho_{A}, \rho_{B})
\leq d$.
%Note
%that if $e = \rho_A\oplus \rho_B$, it holds that $W(e) \leq d$ where
%$W(\cdot)$ is the Hamming weight of a string (i.e., the number of nonzero bits).
%
The value of $d$ in specific protocol can be determined
experimentally for a given physical quantity, (we do that in the next section,
see Figure~\ref{figure:linear-error} and
Table~\ref{table:estimation}). We  next describe our
information reconciliation protocol that can be proven correct under
the above definitions.

First, recall that an $\langle n,m,m',d\rangle$-secure sketch ~\cite{DRS04} is a
pair of randomized procedures, ``sketch'' (\texttt{SS}), and
``recover" (\texttt{REC}), such that
\texttt{SS} on input $n$-bit string $w$, returns a bit string $s\in
\{0,1\}^{*}$, and \texttt{REC} takes an
$n$-bit string $w'$ and a bit string $s\in \{0,1\}^{*}$, and outputs
an $n$-bit string.

\texttt{SS} and \texttt{REC} satisfy \emph{correctness} and
\emph{security}. In particular, correctness states that  if $H(w,w')\leq d$, then
$\texttt{REC}(w',\texttt{SS}(w))=w$; while security means
that for any distribution $W$ with
min-entropy $m$ over $\{0,1\}^{n}$, the value of $W$ can be
recovered by an adversary who observes $s$ with probability no
greater than $2^{-m'},$ i.e,
$\widetilde{\textbf{H}}_{\infty}(W|SS(W))\geq m'$.

Achieving information reconciliation between two parties holding
data $w,w'$ respectively using sketches is simple. The first party
transmits $s =\texttt{SS}(w)$ and the second computes
$\texttt{REC}(w',s)$. Various secure sketches can be designed
depending on the metric space. We next recall two secure sketches
over the Hamming metric. Let $\mathcal{C}$ be an error correcting
code over $\ff_{2}^{n}$ which corrects up to $d$ errors and
$|\mathcal{C}|=2^{k}$. Thus, there is an algorithm $\mathsf{Dec}$
that given any $v\in \mathcal{C}$ and any $e\in \{0,1\}^n$ with
$Wt(e) \leq d$, it holds that $\mathsf{Dec}(v\oplus e) = v$, where
$Wt(\cdot)$ stands for the Hamming weight of $e$. Observe that under
our assumption $H(\rho_A,\rho_B)$ is at most $d$ and thus the output
of Bob is exactly $\rho_A$. It follows that Alice and Bob recover
$\rho_A$ and thus information reconciliation is achieved.

We note that the communication overhead of the above protocol is
unnecessarily high. An improvement can be achieved by using a more
communication efficient secure sketch. For a $(n,k)$ linear code,
the $n\times(n-k)$ parity check matrix $\mathbf{H}$ has the property
that for every codeword $c$, $c\cdot\mathbf{H}=0$. For any
transmitted message $y=x+e$ of length $n$, the syndrome is defined
as $\texttt{syn}(y)=y\cdot\mathbf{H}$. Assuming that the
error-codeword $e$ can be easily computed using $\texttt{syn}(e)$
(doing what is known as syndrome-decoding) the length of the sketch
is only $n-k$.
%thus achieving a factor of $1-\frac{k}{n}$ improvement
%in the transmission.
Bob computes the syndrome for his own bits, $\texttt{syn}(\rho_B)$,
and finally computes
$\texttt{syn}(e)=\texttt{syn}(\rho_B)-\texttt{syn}(\rho_A)$, from
which he obtains $e$, and thus $\rho_A=\rho_{B}-e$. It is not hard
to see that the syndrome based construction is an
$(n,m,m-n+k,d)$-secure sketch, from the chain rule described in
definition~\ref{def:ave-entropy}.

\noindent{\bf Privacy Amplification} After both parties have derived
the same bitstrings, they must ``purify'' the strings to make the
joint output suitably random as a cryptographic key. For this task,
we employ   a randomness extractor~\cite{NZ96}. We first recall the
definition of statistical distance which characterizes how similar
two distributions are. Then the level of randomness of a random
variable can be measured by the statistical distance of the variable
from the uniform distribution. The statistical distance of two
probability distributions $X_{1},X_{2}$ with support $S$ is:
$\max\limits_{T\subseteq S}\{Pr[X_{1}\in T]-Pr[X_{2}\in T]\}$, denoted by
$SD(X_{1},X_{2})$. It can be also defined as $\frac{1}{2}\sum_{s\in
S}|Pr_{X_{1}}[s]-Pr_{X_{2}}[s]|.$ Two distributions $X_{1},X_{2}$
are said to be $\epsilon$-close if $SD(X_{1},X_{2}) \leq \epsilon$.
We now define randomness extractors.
\begin{definition}
  A function $Ext:\{0,1\}^{t}\times \{0,1\}^{r}\rightarrow\{0,1\}^{l}$ is a
  $(t,s,l,\epsilon)$-extractor if for every random variable $X$ over
  $\{0,1\}^t$
  having  min-entropy at least $s$, it holds that $Ext(X,U_{r})$ is $\epsilon$-close
  to $U_l$, where $U_{r},U_l$ denote uniform distributions over $\{0,1\}^{r}$ and $\{0,1\}^l$ respectively.
Further,  we say $Ext$ is an $(t,s,l,\epsilon)$-strong extractor
if:$SD((Ext(X,U_{r}),U_{r}),(U_{l},U_{r}))\leq \epsilon.$ Finally,
$Ext$ is an average-case $(t,s,l,\epsilon)$-strong extractor if the
above
  holds when $\widetilde{\textbf{H}}_{\infty}(X|Y)\geq s$ for some random
  variable $Y$ that is known to the adversary.
\end{definition}

The following lemma shows that there are efficient constructions of
(average case) strong extractor and thus we may utilize a public
random seed to obtain a close to uniform key.

\begin{lemma} [{\cite{DRS04}}]%(Generalized Leftover Hash Lemma,cf.~\cite{DKRS06})
\label{lemma:leftover} Universal hash functions \cite{CW79} are
average case $(t,s,l,\epsilon)$ strong extractors whenever $s\geq
  l+2\log(\frac{1}{\epsilon})-2$.
\end{lemma}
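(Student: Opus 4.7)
The plan is to adapt the classical proof of the Leftover Hash Lemma via collision probability so that the bound naturally depends on the average-case conditional min-entropy $\widetilde{\textbf{H}}_\infty(X|Y)$ rather than on the worst-case value of $\textbf{H}_\infty(X|Y{=}y)$. Write the universal family as $\{h_s\}_{s\in\{0,1\}^r}$, so that by hypothesis $\Pr_s[h_s(x)=h_s(x')]\leq 2^{-l}$ for all $x\neq x'$. The target is $SD((h_S(X),S,Y),(U_l,U_r,Y))\leq \epsilon$; since the two distributions agree on their $Y$-marginal, this quantity equals $E_{y\leftarrow Y}[SD((h_S(X),S\mid Y{=}y),(U_l,U_r))]$, and it suffices to bound the inner statistical distance pointwise and then average.

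For the first step, I fix $y$ and bound the collision probability of the pair $(h_S(X),S)$ conditioned on $Y{=}y$. A direct calculation, using independence of $S$ from $X$ together with the universal hashing property, gives
\[
\mathrm{CP}\bigl((h_S(X),S)\mid Y{=}y\bigr) \,\leq\, \frac{1}{2^r}\Bigl(2^{-\textbf{H}_\infty(X|Y=y)} + 2^{-l}\Bigr),
\]
since the collision probability of $X$ given $Y{=}y$ is itself at most $\max_x\Pr[X{=}x\mid Y{=}y]=2^{-\textbf{H}_\infty(X|Y=y)}$. Combining with the standard Cauchy--Schwarz consequence $SD(Z,U)\leq \tfrac{1}{2}\sqrt{N\cdot \mathrm{CP}(Z)-1}$ on a support of size $N=2^{l+r}$, the constant $-1$ exactly cancels the $2^{-l}$ contribution, and for each fixed $y$ I obtain
\[
SD\bigl((h_S(X),S\mid Y{=}y),\,(U_l,U_r)\bigr)\,\leq\,\tfrac{1}{2}\sqrt{2^l\cdot 2^{-\textbf{H}_\infty(X|Y=y)}}.
\]

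For the second step, I take expectation over $y\leftarrow Y$ and apply Jensen's inequality to the concave function $\sqrt{\cdot}$:
\[
SD\bigl((h_S(X),S,Y),\,(U_l,U_r,Y)\bigr) \,\leq\, \tfrac{1}{2}\sqrt{2^l\cdot E_{y\leftarrow Y}[2^{-\textbf{H}_\infty(X|Y=y)}]} \,=\, \tfrac{1}{2}\sqrt{2^{\,l-\widetilde{\textbf{H}}_\infty(X|Y)}},
\]
where the last equality is exactly Definition~\ref{def:ave-entropy}. Plugging in the hypothesis $\widetilde{\textbf{H}}_\infty(X|Y)\geq s\geq l+2\log(1/\epsilon)-2$ makes this quantity at most $\epsilon$, finishing the argument.

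The only delicate point, and the one distinguishing this from the classical unconditional Leftover Hash Lemma, is the ordering of operations in the second step: applying Jensen's inequality \emph{after} averaging the per-$y$ square-root bound is exactly what converts the pointwise dependence on $\textbf{H}_\infty(X|Y{=}y)$ into clean dependence on $\widetilde{\textbf{H}}_\infty(X|Y)$. Invoking the unconditional Leftover Hash Lemma separately for each $y$ and then averaging would instead require a worst-case entropy bound over $y$, strictly stronger than the hypothesis; the collision-probability route avoids this loss and preserves the threshold $l+2\log(1/\epsilon)-2$, with the $-2$ being the standard $\tfrac{1}{2}$ Cauchy--Schwarz factor in log scale.
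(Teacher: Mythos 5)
Your proof is correct and is essentially the standard argument: the paper states this lemma without proof, citing it from the literature (it is the generalized leftover hash lemma of Dodis et al.~\cite{DKRS06}), and your collision-probability computation followed by Cauchy--Schwarz and Jensen over $y$ is exactly the proof given there, including the cancellation that yields the $-2$ slack. No gaps.
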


\ignore{%%%%%%%%%%%%%%%%%%%%%%%%%%
\begin{figure*}[t]
\begin{center}
\begin{tabular}{p{0.25\textwidth} p{0.25\textwidth} l}
\texttt{Alice}  && \texttt{Bob} \\
$\rho_{A}\in \{0,1\}^{n}$ && $\rho_{B}\in \{0,1\}^{n}$\\

\hline

\\
$s \stackrel{r}{\leftarrow} \{0,1\}^{t}$ &&\\
$c\stackrel{r}{\leftarrow} \mathcal{C}$ &&\\
$u =\rho_{A}\oplus c$ & $\xrightarrow{\hspace{1cm}u,s\hspace{1cm}}$
&
 $v=\rho_B\oplus u$
\\

Return $Ext(\rho_{A},s)$ & & Return $Ext(Dec(v),s) $ \\

\hline
\end{tabular}
\end{center}

\caption{\label{fig:system}Protocol-I for key agreement between two parties,
Alice and Bob; $s$ is the short random seed for extractor $Ext$; $\mathcal{C}$
is an error correcting code, $Dec$ is the decoding algorithm.}
\end{figure*}

%%%%%explanation of the reason of such definition
Some justification of these measures is necessary. Min-entropy is
naturally a useful measure in our setting since it determines the
worst-case existing uncertainty in a random variable $A$. That is,
if min-entropy of $A$ is $h$ then an adversary can predict its value
with probability at most $2^{-h}$. Now for some random variable $A$
it may be possible that the adversary has some side information $B$
about it. For concrete conditions, the conditional min-entropy can
be simply defined as $H_\infty(A|B=b)=-\log(max_a Pr[A=a|B=b])$. In
the setting where $B$ is also a random variable, conditional
min-entropy can be expressed as the expectation $E_{b\leftarrow
B}[H_\infty(A|B=b)]$; however, this method might not express the
worst case uncertainty appropriately. For example (as explained
in~\cite{DKRS06}), $B$ may only take two values with equal
probability $1/2$ over which $A$ has conditional min-entropy 100 and
0 respectively. It follows that averaging the two would yield a
conditional min-entropy of 50 which does not reflect the fact that
half of the times $A$ has absolutely no uncertainty and thus an
algorithm can perfectly predict the value of $A$. In contrast, if we
apply the above definition we will calculate a conditional
min-entropy of about $1$ which more accurately captures the
worst-case uncertainty of $A$ given $B$.
}%%%%%%%%%%%%%%%%%%%%%%%%%%%%%%%%%%5

%
%  \begin{eqnarray*}
%\widetilde{\textbf{H}}_{\infty}(A|B,C)
%      & \geq & \widetilde{\textbf{H}}_{\infty}(A,B|C) - \lambda \\
%      & \geq & \widetilde{\textbf{H}}_{\infty}(A|C)-\lambda,
%  \end{eqnarray*}
%\begin{proof}
%proof
%\end{proof}

\ignore{%%%%
In  Figure
\ref{fig:ssketch} we show how the two parties can perform
reconciliation using a secure sketch.
%
%\scalebox{0.65}{%
\begin{figure*}[ht]
\begin{center}
\begin{tabular}{p{0.25\textwidth} p{0.25\textwidth} l}
\texttt{Alice}  && \texttt{Bob} \\
$\rho_A\in \{0,1\}^{n}$ && $\rho_B\in \{0,1\}^{n}$\\

\hline

\\

$s =\texttt{SS}(\rho_A)$ & $\xrightarrow{\hspace{1cm}s\hspace{1cm}}$
&

\\

Return $\rho_A$ & & Return $\texttt{REC}(\rho_B,s)$ \\

\hline
\end{tabular}
\end{center}
\caption{\label{fig:ssketch} Information Reconciliation based on a Secure Sketch. }
\end{figure*}
}%%%%%%%%%%%%%%%%%%%%%%%%%%%%%%

%\begin{lemma}(Generalized Leftover Hash Lemma,cf.~\cite{DKLS06})
%\label{lemma:leftover}
%  For any $\delta>0$, if $Ext$ is a
%  $(n,k-\log\frac{1}{\delta},m,\epsilon)$-strong extractor, then $Ext$ will
%  also be an average-case $(n,k,m,\epsilon+\delta)$-strong extractor.
%  Specifically universal hash functions \cite{CW79} are average case $(n,k,$
%  $m,\epsilon)$ strong extractors whenever $m\leq
%  k-2\log(\frac{1}{\epsilon})+2$.
%\end{lemma}

%%!TEX root = wir03.tex

\section{A Conditional Min-Entropy Learner for Markovian Processes}
\label{sec:learning}

%\noindent{\em Learning conditional min-entropy:} 
Now we proceed to
introduce the central problem for  security argument of any
key-extraction system from measuring physical quantities. Fix a pair
of jointly distributed random variables $\mathbf{X},\mathbf{Y}$, the
problem of learning conditional min-entropy is to
estimate $\widetilde{\textbf{H}}_{\infty}(X|Y)$ from
polynomially many measurements of $\mf{X,Y}$. Unfortunately, in the
general case, the problem is infeasible to solve.
%It is actually not
%hard to see that there exist two random variable pairs,
%$(\mathbf{X,Y})$ and $(\mathbf{X',Y})$, such that from polynomially
%many samples, no algorithm can  distinguish them even though the
%conditional entropy $\widetilde{\textbf{H}}_{\infty}(X|Y)$ and
%$\widetilde{\textbf{H}}_{\infty}(X'|Y)$ have a big difference. For
%instance, $\mathbf{X}=f(\mathbf{Y})$ for a pseudorandom function
%$f$, while $\mathbf{X'}$ is uniform distribution which is
%independent of $\mathbf{Y}$. Note that in the first case, there is
%no uncertainly in $\mathbf{X}$ given $\mathbf{Y}$, while in the
%latter, all the entropy of $\mathbf{X'}$ remains since the condition
%$\mathbf{Y}$ is not related.
As shown in~\cite{valiant11} that at least $O(\frac{N}{\log N})$
many samples are needed for estimating the entropy where $N$ is the
support size. It is easy to see that conditional entropy is even
more difficult to learn, since one can reduce the problem of
learning the entropy to learning the conditional entropy over a
uniformly distributed variable. It follows that in the general case
the conditional min-entropy is not learnable efficiently. It is
therefore important to identify classes of distributions for which
the conditional min-entropy can be learned. Then, when a certain
type of physical measurement can be justified experimentally to
follow the distribution, our security argument will provide strong
evidence regarding the security of key extraction from such
measurements.
%Observe that the conditional
%min-entropy of the key on all the views of the adversary can be
%reduced to the conditional min-entropy of the physical
%measurements of one player , and a reduction
%of entropy loss due to other procedures including message passing
%signal mapping, and etc.
We  identify a wide class of distributions
of measurements below.

Recall that Alice's measurements are denoted by
$\mf{X}=X_1,\ldots,X_n$, while adversary Eve's observations are
$\mf{Y}=Y_1,\ldots,Y_n$.
A simple observation is that the conditional min-entropy can be
calculated directly (via textbook formula) in time polynomial in the
support of $\mathbf{X}$ and $\mathbf{Y}$. This obviously does not
help since the support sets are growing exponentially in $n$.
Note that both measurements $\mathbf{X},\mf{Y}$ are composed of a
sequence of correlated variables with a relatively small support.
A way to simplify the problem is to assume independence
among these variables - however this can be unrealistic.
Here we propose  a more realistic
assumption that the variables are correlated, but it would still allow us to learn the conditional
entropy efficiently. Our main idea is to consider Alice's measurements
$X_1,\ldots,X_n$ as hidden states of a Markov process, while taking
 Eve's observations $Y_1,\ldots, Y_n$ as the visible
output. The following are the basic elements of an HMM (see
\cite{Rabiner89} for a survey):
\ignore{%%%%%%%%%%%%%%%%%%%%%%%%%%%%%%%%
A hidden Markov model is a statistical Markov model in which the
system being modeled is assumed to be a Markov process with
unobserved (hidden) states. In a regular Markov model, the state is
directly visible to the observer, and therefore the state transition
probabilities are the only parameters. In a hidden Markov model, the
state is not directly visible, but output, dependent on the state,
is visible. Each state has a probability distribution over the
possible output tokens. Therefore the sequence of tokens generated
by an HMM gives some information about the sequence of states. Note
that the adjective 'hidden' refers to the state sequence through
which the model passes, not to the parameters of the model; even if
the model parameters are known exactly, the model is still 'hidden'.
}%%%%%%%%%%%%%%%%%%%%%%%%%%%%%%%
\begin{itemize}
\item $k$, number of hidden states
\item $\textbf{S}$, set of states, $\textbf{S}=\{s_1,\ldots,s_k\}$
\item $m$, number of observed symbols
\item $\textbf{O}$, set of observed symbols, $\textbf{O}=\{o_1,\ldots,o_m\}$
\item $\textbf{X}=X_1,\ldots,X_n$, sequence of hidden states
\item $\textbf{Y}=Y_1,\ldots, Y_n$, sequence of observed symbols
\item $A$, the state transition probability matrix, $a_{ij}=\Pr[X_{t+1}=s_{j}|X_{t}=s_{i}]$
\item $B$, the observation probability distribution, $b_{j}(o_{q})=\Pr[Y_{t}=o_{q}|X_{t}=s_{j}]$
\item $\pi$, the initial state distribution, $\pi_{i}=\Pr[X_{1}=s_i]$
\end{itemize}

%\noindent{\bf Assumptions:} Based on the above model, and
%observations from the experiments, the following assumptions can be
%derived about measurements $\{X_i, Y_i\}$, all of which will be
%experimentally falsified.

Given the above we now turn to  the conditional min-entropy calculation. From Definition~\ref{def:ave-entropy}, we get:
\begin{eqnarray}
\widetilde{\textbf{H}}_{\infty}(\textbf{X}|\textbf{Y})
   & = &-\log(E_{y\leftarrow
\textbf{Y}}[2^{-\textbf{H}_{\infty}(\textbf{X}|\textbf{Y}=y)}])\notag\\
   & = &-\log(E_{y\leftarrow \textbf{Y}}[max_{{x}}
\Pr(\textbf{X}={x}|\textbf{Y}=y)])\notag\\
   & = &-\log(\sum_{y\leftarrow \textbf{Y}}[max_{x}
\Pr(\textbf{X}=x,\textbf{Y}=y)])\label{formula:ave-entropy}
\end{eqnarray}

We first calculate the maximum joint probability for a given
sequence of observed symbols $y_1,\ldots,y_n$:
$\max_{x_1,\ldots,x_n}\Pr[\textbf{X}=x_1,\ldots,x_n,\textbf{Y}=y_1,\ldots,y_n].$

%under an extra independent observation assumption, which essentially
%means observed symbols are independent given the corresponding
%states.That is, for any $n$,
%\begin{eqnarray*}
%\Pr[Y_1=y_1,\ldots,Y_n=y_{n}|X_1=x_{1},\ldots,X_n=x_{n}] \\
%   = \prod_{i=1}^{n}Pr[Y_i=y_1|X_i=x_n].
%\end{eqnarray*}

This joint probability can be calculated using the Viterbi algorithm
which is a dynamic programming algorithm we briefly describe below.

Consider the objective function
$\delta_{t}(s_i)=\max\limits_{x_1,\ldots,x_{t-1}}\Pr(X_{1}=x_{1},\ldots,X_{t}=s_{i},Y_{1}=y_{1},\ldots,Y_{t}=y_{t}),$
our goal is then to calculate $\max_{s}\delta_{n}(s)$, with the
following recursive relation:
$\delta_{t+1}(s_j)=\max_{s_i}[\delta_{t}(s_{i})a_{ij}]b_{j}(y_{t+1}),$
which is satisfied by any HMM. The Viterbi algorithm consists of
three key steps:
\begin{enumerate}
\item{Initialization:} $\delta_{i}(s_i)=\pi_{i}\cdot b_{1}(y_1)$
\item{Recursion:} $\delta_{t+1}(s_j)=\max_{s_i}[\delta_{t}(s_{i})a_{ij}]b_{j}(y_{t+1})$
\item{Termination:} $P^{*}=\max_{s}\delta_{n}(s)$
\end{enumerate}

\ignore{%%%%%%%%%%%%%%%%%%%%%%%%%%%%%%5
%%proof of viterbi alg
this can be seen as follows:

\begin{align}
&\delta_{t+1}(s_j)\notag\\
=&\mathop{max}_{x_1,\ldots,x_{t}}\Pr(X_{1}=x_{1},\ldots,X_{t+1}=s_{j},Y_{1}=y_{1},\ldots,Y_{t+1}=y_{t+1})\\
=&{\mathop{max}_{x_{t}}\delta_{t}(x_t)\Pr[X_{t+1}=j,Y_{t+1}=y_{t+1}|x_1,\ldots,x_t,y_1,\ldots,y_t]}\\
=&\mathop{max}_{x_{t}}\frac{\delta_{t}(x_t)\Pr[X_{t+1}=j,y_1,\ldots,y_{t+1}|x_1,\ldots,x_t]}{\Pr[y_1,\ldots,y_t|x_1,\ldots,x_t]}\\
=&\mathop{max}_{x_{t}}\frac{\delta_{t}(x_t)\Pr[y_1,\ldots,y_{t+1}|x_1,\ldots,x_t,j]\Pr[X_{t+1}=j|x_{1},\ldots,x_t]}{\Pr[y_1,\ldots,y_t|x_1,\ldots,x_t]}\\
=&\mathop{max}_{x_{t}}\delta_{t}(x_t)b_{j}(y_{t+1})a_{x_{t}j}\\
=&max_{s_i}[\delta_{t}(s_{i})a_{ij}]b_{j}(y_{t+1})
\end{align}
}%%%%%%%%%%%%%%%%%%%%%%%%%%%%%%%%%

%The proof of the relation relies on the independent observation
%assumption (assumption~\ref{assumption:independent-observation}) as
%well as the fact that $x_1,\ldots,x_n$ are produced by a Markov
%process.
%

It follows that we can calculate the maximum joint probability for a given
observation sequence. Still, to get an estimation for the average case
min-entropy by applying formula (\ref{formula:ave-entropy}), we need
to sum up all such joint probability for all possible observation
sequences. It is very likely that there are exponentially many such
sequences. For example, suppose for any state $s_i$, there are only
3 possible observations, but they are equally possible. In this
case, there would be $3^n$ possible observation sequences with
length $n$.

To circumvent the above, we also study the property of the
distribution of conditional entropy for fixed observed sequences
(i.e., when conditioned on a specific sequence of Eve's
measurements).

Note that the Viterbi algorithm gives only the maximum joint
probability, in order to estimate the conditional min-entropy given
each observed sequence, we need to further get the probability of
appearance for each observed sequence. The forward
algorithm~\cite{BE67} can be used to calculate this quantity in an
HMM. For a given observed sequence $y_1,\ldots,y_n$, we define the
objective function as:
$\alpha_{t}(s_i)=\Pr[Y_{1}=y_{1},\ldots,Y_{t}=y_{t},X_t=s_i],$\ then
the probability of the sequence
$\Pr[Y_{1}=y_{1},\ldots,Y_{n}=y_{n}]=\sum_{i}\alpha_{n}(s_i).$ The
forward algorithm proceeds as:
\begin{itemize}
\item{Initialization:} $\alpha_{1}(s_i)=\pi_{i}\cdot b_{i}(y_1)$
\item{Induction:} $\alpha_{t+1}(s_j)=[\sum_{i}\alpha_{t}(s_{i})\cdot a_{i,j}]b_{j}(y_{t+1})$
\item{Termination:} $P=\sum_{i}\alpha_{n}(s_i)$
\end{itemize}

If we assume that the conditional min-entropies on frequently appearing
observations are stably distributed,
%(jumping ahead, we will show in
%Figure~\ref{figure:stable-entropy}, and
%Figure~\ref{figure:linear-entropy} that this is
%  the case observed in our experiments).
  then, the average case
min-entropy can be approximated by the average of conditional
min-entropies on fixed observations. In other words,
$E_{y\leftarrow\textbf{Y}}[2^{-\textbf{H}_{\infty}(\textbf{X}|\textbf{Y}=y)}]$
will be almost the same as the value
$2^{-\textbf{H}_{\infty}(\textbf{X}|\textbf{Y}=y)}$ for each $y$,
and thus we can approximate
$\widetilde{\textbf{H}}_{\infty}(\textbf{X}|\textbf{Y})$ as the
average of $\textbf{H}_{\infty}(\textbf{X}|\textbf{Y}=y)$ from all
the measurements $y$ we observe.
%it is reasonable to estimate the
%conditional min-entropy in the average case with only a few
%measurements

Now, suppose we perform a series of $M$ experiments, each one with
$n$ measurements. For the $j$-th experiment, we record the
measurements of Eve, and we use them together with the HMM
parameters to calculate the maximum of conditional probability as
$\frac{P^*_j}{P_j}$, where $P^*_j$ is the maximum joint probability given an
observed sequence, returned by the Viterbi algorithm. The
conditional min-entropy for this observed sequence is
$-\log(\frac{P^*_j}{P_j})$. In this way, we will use the average of these
entropies to approximate the average case min-entropy.

To summarize, with the following channel abstractions, one can learn
the conditional min-entropy from only a reasonable number of
experiments.
\begin{enumerate}
\item
\label{assumption:markov}First, the measurements are produced from a
memoryless Markov process, that is, $\forall i,j$, we have:
$\Pr[X_i=x_{1}|X_{i-1}=x_{0}]=\Pr[X_j=x_{1}|X_{j-1}=x_{0}].$
\item
\label{assumption:stationary-transit}Second, the observation
probability does not change over time: specifically, for all
different times $i,j$, $\Pr[Y_i=y|X_i=x]=\Pr[Y_j=y|X_j=x]$.
\item
\label{assumption:independent-observation}Further, observations are
independent, i.e., $\forall n$,
$\Pr[Y_1=y_1,\ldots,Y_n=y_{n}|X_1=x_{1},\ldots,X_n=x_{n}]=\prod_{i=1}^{n}\Pr[Y_i=y_i|X_i=x_i].$
\item
\label{assumption:stationary-entropy} Finally, stable conditional entropy
assumption: we assume that the
random variable ${\mf{H}}_{\infty}(\mf{X|Y=y})$ has small standard
deviation (hence its expectation can be calculated via a small
number of experiments). 
\end{enumerate}

The advantage of our approach is that in concrete
application scenarios, the conditions can be experimentally
justified. Then,  if the physical measurements do satisfy these
properties, we can apply our method to approximate the conditional
min-entropy thus laying a basis for secure extraction following the
protocol and analysis we present in the next section.

%\smallskip
%\noindent{\em Remark} that we show one candidate in our conditional
%min-entropy learner algorithm that lies between the theoretical yet
%unrealistic analysis, and efficient but unsound analysis. We make
%two assumptions about the physical measurements, one is the channel
%abstraction of HMM model, and the other is the distribution of
%conditional entropy on specific eavesdropped measurements does not
%have significantly big biases. Under these two assumptions, we can
%calculate the conditional min-entropy efficiently from the samples
%collected in a small number of experiments. At the same time, we
%will experimentally justify these two assumptions. Of course, one
%can improve the results of learning the conditional entropy through
%more complicated but falsifiable modeling, we will leave this as one
%of the most central open problems for future research.

\ignore{%%%%%%%%%%%%%%%%%%%%%%%%%%%%%%%%%%%%%%%%%%%
%%%%%%sum to estimate average case conditional entropy
It follows that if we have the parameters of the HMM that expresses
the measurements of Alice and Eve over the channel, then we can use
the Viterbi algorithm to estimate the conditional min-entropy of
Alice's measurements as follows. We first perform a series of $M$
experiments each one with $N$ measurements each. For each one, say
the $j$-th experiment, we record the measurements of Eve, and we use
them together with the HMM parameters to calculate the conditional
max probability as $P^*_j = \max_{s}\delta_{n}(s)$. Finally we
estimate the conditional min-entropy as $H^* = -\log ( \sum_{j=1}^M
P^*_j)$.
}%%%%%%%%%%%%%%%%%%%%%%%%%%%%%%%%%%%%%%%%

\section{Application: Physical Layer Key Extraction with Experimental Security Analysis}
\label{sec:wireless-protocol}

We now provide a  key extraction protocol from wireless
signals and analyze its security using the methodology proposed
above. Furthermore, we instantiate it with concrete parameters enabling the two parties to derive e.g., a 128-bit key.
%The justifications for all the assumptions we make using real
%experimental data are in the appendix.

\subsection{Key Extraction Protocol from Wireless Signal Envelope}

Two nodes Alice and Bob execute a ping-like protocol, they send each
other messages through a wireless channel and measure the received
wireless signal strength.  Then they apply the general protocol of the previous
section to those measurements to
extract a key which will later be used for their secure
communication. Note that since the actually messages communicated do not
affect the exchange, the measurements  can be collected during a regular
plaintext communication between Alice and Bob. This means that key-exchange
in this setting can piggyback on top of a plaintext communication.
Our concrete key extraction protocol from received wireless signal
strength is presented in Figure~\ref{figure:protocol}.

\begin{figure*}[t]
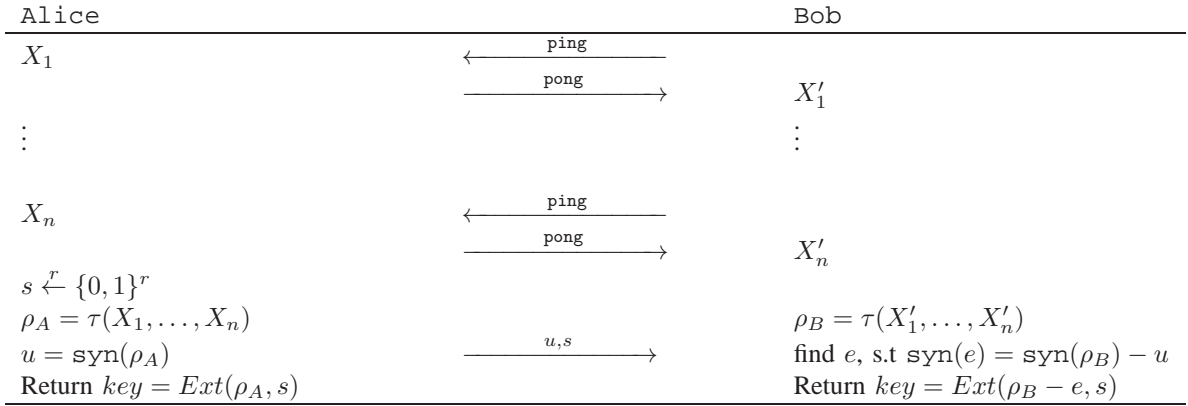

\begin{center}
\begin{tabular}{p{0.3\textwidth} p{0.22\textwidth} l}
\texttt{Alice}  && \texttt{Bob} \\
\hline

$X_1$ & $\xleftarrow{\hspace{1cm}\mathtt{ping}\hspace{1cm}}$ &
\\
 &
$\xrightarrow{\hspace{1cm}\mathtt{pong}\hspace{1cm}}$ &
$X'_1$ \\

\vdots   & &
\vdots \\

$X_n$ & $\xleftarrow{\hspace{1cm}\mathtt{ping}\hspace{1cm}}$ &
\\
 &
$\xrightarrow{\hspace{1cm}\mathtt{pong}\hspace{1cm}}$ &
$X'_n$ \\

$s \stackrel{r}{\leftarrow} \{0,1\}^{r}$ & & \\
$\rho_A = \tau(X_1,\ldots,X_n)$ & & $\rho_B = \tau(X'_1,\ldots,X'_n)$ \\
$u =\texttt{syn}(\rho_A)$ &
$\xrightarrow{\hspace{1cm}u,s\hspace{1cm}}$ &
 find $e$, s.t $\texttt{syn}(e) = \texttt{syn}(\rho_B)-u$
\\

Return $key=Ext(\rho_A,s)$ & & Return $key=Ext(\rho_{B}-e,s) $ \\

\hline
\end{tabular}
\end{center}

\caption{\label{figure:protocol} Our  physical-layer key extraction.
$\texttt{syn}$ is the algorithm computing the syndrome w.r.t. a
public error-correcting code, $\tau(\cdot)$ is a low-distortion
embedding into Hamming metric space, $s$ is the short random seed
for extractor $Ext$. }
\end{figure*}

\subsection{Experimental Setup}
\label{subsec:experiment-setup}

The hardware basis for our experimental platform was the Crossbow MICAz sensor
mote, which contains a Chipcon CC2420 IEEE 802.15.4-compliant RF transceiver.
The RF transceiver in our experiments operated at a frequency of 2.48 GHz and
with a data rate of 250 kbps. Transmit power was set to the maximum level of 0
dBm for each MICAz device in our test network.

The software portion of our experimental platform primarily
consisted of a {\tt ping}-like application that we implemented on
top of the TinyOS operating system. Our implementation consists of
three components: a base station (Alice), a mobile client (Bob) and
an eavesdropper (Eve). Bob sends a {\tt PING} frame
containing a 4-byte sequence number to Alice, who records the
sequence number and the received frame's RSSI value as computed by
the CC2420 transceiver. Alice then sends a {\tt PONG} frame to Bob
containing the sequence number from Bob's original {\tt PING}
request. Bob records the response frame's RSSI value, waits for
1300ms, increments the sequence number and repeats the above
process. If Bob does not receive a response within 500ms of
transmitting his ping request, he will increment the current
sequence number and transmit a new ping request. The 1300ms delay
between {\tt PING} frames sent by Bob is intended to ensure that our
assumption of independence between non-consecutive channel samples
apart holds true, which we  experimentally evaluate  in
Section~\ref{subsec:channel-assumptions}.

%The
%reasoning behind this reduction in precision is explained further
%in Section~\ref{subsec:estimate-uncertainty}.

The eavesdropper implementation, Eve, simply listens passively for any frames
transmitted by any other nearby nodes. Eve records the source and destination
addresses, frame type (i.e., {\tt PING} or {\tt PONG}), sequence number and
observed RSSI value for that frame. After the experiment is manually
terminated, the recorded data is uploaded from each sensor mote to a laptop
for further processing and analysis. We resolve lost frames during
post-processing by computing the intersection between frame sequence numbers
observed by Alice and those observed by Bob. In practice, Alice and Bob could
use an acknowledgement-based protocol (e.g., TCP) to detect lost frames;
however, we opted for a post-processing approach for simplicity of
implementation.

\begin{figure}
    \centering
    \includegraphics[width=3in,height=1.5in]{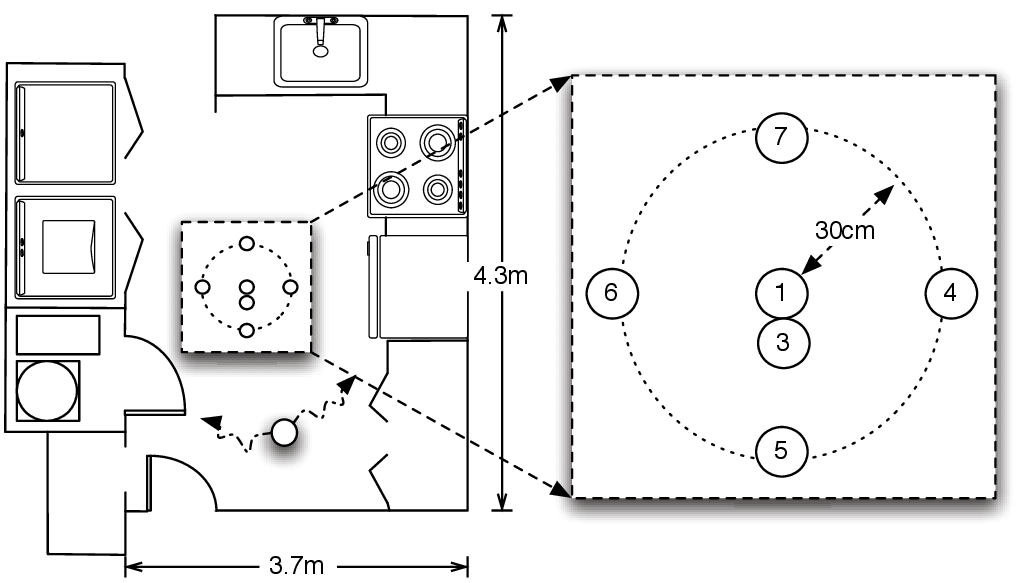}
    %\resizebox{3.25in}{!}{\includegraphics{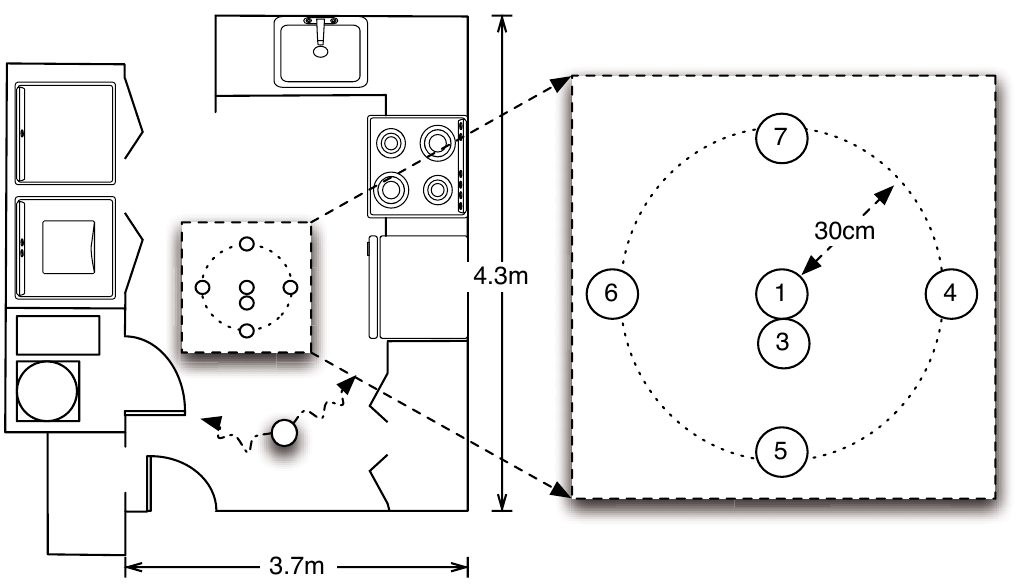}}
    \caption{Floorplan for our indoor test environment and layout of sensor
    motes. Node 1 is the stationary base station (Alice) and Node 2 (Bob) is
    fixed to a robotic platform that moves randomly throughout the test
    environment. The remaining five nodes are passive eavesdropper nodes (Eve)
    that simply record all observed frames and their observed RSSI values.}
    \label{fig:node-layout}
\end{figure}

We conducted two separate data collection experiments in a basic indoor
residential environment. In each experiment, we placed the stationary base
station (Alice) in the middle of the environment with five eavesdropper nodes
placed around the base station as shown in Figure~\ref{fig:node-layout}. The
mobile node (Bob) was mounted to a purpose-built robotic platform that was
programmed to move throughout the environment in a random pattern for the
duration of the experiment at a rate of approximately 14 cm/sec while avoiding
obstacles. The robot would move forward for a period of time selected
uniformly at random from between 5 and 15 seconds. At the end of the randomly
selected travel time, the robot would turn at an angle selected uniformly at
random from between -75 to 75 degrees indicating a left or right turn,
respectively. If the robot detected an obstacle via its front-mounted bump
sensors or ultrasonic sensor before the time expired, the robot would
similarly choose a new direction and travel time at random. The travel time
and turn angles were selected primarily based on the small size of the test
environment.

Each sensor mote can store up to 25,000 channel measurements and so, with a
1300ms delay between {\tt PING} frames, each experiment represents
approximately nine hours of continuous movement over a total distance traveled
of around 4 kilometers per experiment. The data for each node from both
experiments were concatenated together resulting in slightly less than 50,000
samples for each node after accounting for lost or dropped frames. Note that
the number of samples we collected is for the purpose of our security
analysis; an actual key can be computed much faster. See
Section~\ref{sec:instantiation} for more details on the number of samples required for key
extraction.

%%!TEX root = wir03.tex

\subsection{Security Analysis of Our Key Extraction Protocol}
\label{subsec:initial-entropy}

Each experimental dataset was downsampled by six samples in order to
simulate a greater idle time between {\tt PING} frames and ensure
that a channel sample $X_i$ is correlated only with
$X_{i-1},X_{i+1}$, the reasoning will be explained in
Section~\ref{subsec:channel-assumptions}. After downsampling, we
still have around 8100 samples in the experiment. We took 8000
samples and divided them into 100-sample slices, each slice
representing a single {\em key experiment}.

%From the experimental data, all channel measurements are negative
%integers, within the range $[-46,-1]$. So, we use 64-bits unary
%encoding that represents every measurement $x$ as $64-|x|$ bits $0$s
%together with $|x|$ bits $1$s. Then every 100-sample slice becomes a
%6400 bits string. The average $L_1$ distance between Alice and Bob's
%sample slice is only around 58, which is exactly the Hamming
%distance, this will result in a error rate lower than 1\%.
%\medskip
\noindent{\bf Learning the conditional min-entropy:} First, we
estimate the average case min-entropy of Alice's channel
measurements given a single adversary's view using the methodology
proposed in Section~\ref{sec:learning} by incorporating the data
from the key experiments. In our setup, there are 5 adversarial
nodes and we calculated the entropy for each one
separately. All our measurements were calculated for 32 distinct
signal levels.
%Further, we dropped the least 3 significant bits of each channel
%sample for all involved parties, since higher precision experiments
%would have required an order of magnitude more time in data
%collection for the same statistical significance.
%
Moreover, to make our adversarial model stronger, we simply ignored
for both Alice and Bob all samples that were not also observed by
Eve. By doing so, we ensure that we could obtain security
without relying on frames missed by the eavesdropper.

For every experiment of 100 samples, we first compute the maximum
conditional probability of the observed sequence by using the
Viterbi algorithm and forward algorithm as explained in
Section~\ref{sec:learning}; After getting the conditional min-entropy for
the 80 key experiments, we average them to approximate the
average case min-entropy due to the assumption that the
conditional min-entropies on specific observations are stably
distributed (experimental justification will appear in
Figure~\ref{figure:stable-entropy}).

%We prepare all the parameters for the HMM as follows:

Since $X_i$, the random variable representing Alice's $i$-th sample,
follows the same distribution for all $i$ (will be justified in
Section~\ref{subsec:channel-assumptions}), we get the initial state
distribution $\pi$ by counting throughout all channel samples. We
prepare the state transition probability matrix $A$ by estimating
entry $a_{ij}$ by $\frac{n_{j}}{n_{i}}$, where $n_j$ is the number
of measurements equal to $s_i$ such that the next measurement equals
$s_j$, and $n_i$ is the number of measurements equal to $s_i$.
Similarly, we approximate entry $b_{j}(k)$ for the observation
probability distribution by $\frac{m_k}{m_j}$ where $m_j$ is the
number of Alice's measurements equal to $s_j$, and $m_k$ is the
number of Eve's observations equal to $o_k$ while the corresponding
measurements of Alice is $s_j$. This is reasonable since the
observation probability does not change (the independent observation
assumption).

With all these parameters, we next apply the Viterbi algorithm and
forward algorithm to compute the maximum conditional probability
$\mathop{\max}_{x_1,\ldots,x_{100}}\Pr[X_1=x_1,\ldots,X_{100}=x_{100}|
y_{1},\ldots,y_{100}],$ for every experiment, where
$y_{1},\ldots,y_{100}$ are Eve's measurements for a 100-sample
slice. We then apply minus logarithm to those probabilities, and
average them to get our final approximation for the average case
min-entropy contained in 100 samples of Alice. Comparisons for the
five nodes are shown in the sixth column of
Table~\ref{table:estimation}.
%the
%conditional min-entropy on each observed sequence is stably
%distributed as shown in figure~\ref{figure:stable-entropy},

\begin{table}
\centering
\begin{tabular}{|c|c|c|c|c|c|r|r|r|}
  \hline
  Node & $M$ & $n$ & $N$ &  $\bar{d}$ & H & $R_e$ & $R$ \\ \hline
  3 & 80 &100& 800 & 4.42  & 88.40 &0.55\%& 11.05\%\\ \hline
  4 & 80 &100& 800 & 4.29  & 99.82  &0.54\%&12.48\% \\ \hline
  5 & 80 &100& 800 & 4.36  & 98.89   &0.55\%&12.36\%\\ \hline
  6 & 80 &100& 800 & 4.28  & 93.07   &0.54\%&11.63\%\\ \hline
  7 & 80 &100& 800 & 4.35  & 94.08  &0.54\%&11.76\%\\ \hline
\end{tabular}
\caption{\label{table:estimation} Comparison of conditional
min-entropy for the 5 eavesdropping  nodes.  $M$: \#
of experiments, $n$: \# of channel samples in each experiment,
$N=|\rho_A|$: the length of bit string after bit quantization,
$\bar{d}$: the average \# of word errors, $H$: the average
case min-entropy, $R_e=d/|\rho_A|$: average error rate, and $R$: average entropy rate. The node IDs are
as shown in Figure~\ref{fig:node-layout}.}
\vspace{-8mm}
\end{table}

\noindent{\bf Accounting for Entropy Loss:} With a reasonable amount
of entropy contained in Alice's channel measurements, conditioned on
adversary's view, we now proceed to calculate the entropy loss for
each step, and subtract those losses to get the final result about
security of the key.

First, for the bit quantization step, the way we defined   the
embedding $\tau$, it is a one-to-one map so, obviously, there is no
entropy loss during this step. Thus,
$\widetilde{\textbf{H}}_{\infty}(\rho_A|\textbf{Y})=\widetilde{\textbf{H}}_{\infty}(\textbf{X}|\textbf{Y}).$

Next we consider the entropy loss in the information reconciliation
step. Suppose an $(n,k)$ binary code is used to construct the
syndrome based secure sketch $u$ described in the protocol shown in
Fig~\ref{figure:protocol}, a total of $|u|=n-k$ bits are
transmitted during the information reconciliation step. Thus,
$$\widetilde{\textbf{H}}_{\infty}(\rho_A|\textbf{Y},u)=\widetilde{\textbf{H}}_{\infty}(\mf{X}|\textbf{Y},u)\geq\widetilde{\textbf{H}}_{\infty}(\textbf{X}|\textbf{Y})-|u|.$$

Finally,
% if there is enough entropy left, and we want to get a key
%with length $t$ and achieves  a security level of $\epsilon_u$,
%i.e., the distribution of the $key$ is $\epsilon_u$ close to the
%uniform distribution over $\{0,1\}^l$,  
according to
lemma~\ref{lemma:leftover}, if
$\widetilde{\textbf{H}}_{\infty}(\rho_A|\textbf{Y},u)\geq s\geq
l+2\log\frac{1}{\epsilon_u}$, we can apply a $(t,s,l,\epsilon_u)$
strong average case extractor 
%such as a universal hash 
to extract a key of length $l$ with distribution $\epsilon_u$ close to the uniform distribution over $\{0,1\}^l$.

Summarizing above, we derive a sufficient condition for extracting a
secure key. Note that as long as
the error rate is not too high, and there is a non-zero entropy left
after the phase of information reconciliation, it is always possible
to amplify the entropy by collecting more samples before the bit
quantization step, and leaves room for privacy amplification. 

\ignore{%%%%%%%%%%%%%
 \emph{Proof Sketch:} Intuitively, the entropy loss during the
reconciliation is no more than $n-k$, where $k$ is the dimension of
the codeword, as we introduce $k$ bits randomness, and leaks $n$
bits. As long as there are entropy left, the security of the key can
be assured by extractor.

In order for the above theorem to be applicable it must be possible
to choose the parameters of the key generation system such that $l -
2 \log{(1/\epsilon_u)} > 0$. That is, there must be a positive
amount of entropy available that can be distilled by the two
parties. In order to study this, we model the bitstring generation
that is available to Alice
  before the information reconciliation protocol as an order 1 Markov
chain (note that Bob is assumed to obtain a related bitstring with
Hamming  distance of $d$ from the bistring of Alice, thus only
Alice's bitstring needs to concern us from the entropy point of
view).

We first recall some definitions related to Markov models.

\begin{definition}
A Markov chain is a sequence of random variables $X_{1}, X_{2}, X_{3}, ...$
with the Markov property which says that the future and past states are
independent given the present state. More formally,
$Pr(X_{n+1}=x|X_{n}=x_{n},X_{n-1}=x_{n-1},..,X_{1}=x_{1})=Pr(X_{n+1}=x|X_{n}=x_{n})$.
The possible values of $X_{i}$ form a countable set $S$ called the state
space of the chain. A Markov chain of order $m$ is a process satisfying
$Pr(X_{n+1}=x|X_{n}=x_{n},X_{n-1}=x_{n-1},..,X_{1}=x_{1})=Pr(X_{n+1}=x|X_{n}=x_{n},X_{n-1}=x_{n-1},..,X_{n-m}=x_{n-m})$.
\end{definition}

\begin{figure}
\begin{center}
  \epsfig{figure=1,width=4cm}
\end{center}
\caption{\label{fig:markov} Modeling Alice's bitstring as an order-1
Markov model.}
\end{figure}

Figure~\ref{fig:markov} represents an order 1 Markov model with
given transition probabilities $p$ and $q$. Without loss of
generality, we assume the greatest value of $p,q,1-p,1-q$ is $p$.
Then, the most probable sequence should be ``000...0'' with
probability $p^{n}$. It follows that the min-entropy of bitistrings
generated by this process is $n \log_{2}(\frac{1}{p})$.

It follows that based on the above theorem, if $H(\rho_A,\rho_B)$ is
at most $\alpha*n$, then in case $\rho_A$ is generated by an order 1
Markov model with highest probability $p$, and it holds that
$\log_{2}(\frac{1}{p}) > 2\alpha$, there will still be substantial
uncertainty in the bitstrings after the reconciliation step is
executed, to apply randomness extraction and enable a secret key to
be calculated by Alice and Bob.

The above arguments do not yet consider the case where some
non-trivial information about the bitstring $\rho_A$ is transferred
to the adversary. This  can be modelled by considering a probability
of leakage at every step of the Markov model corresponding to the
event that the eavesdropper obtains  with advantage above $1/2$
equal to $\zeta$  the state  of Alice.
In that case we can show that  $\textbf{H}_{\infty}(\rho_A|R_3)\geq
n \cdot ( \log(\frac{1}{p}) - \log(\frac{1/2 + \zeta}{1/2-\zeta}))$,
i.e., the min-entropy of the variable that is available to Alice
conditioned on what the eavesdropper knows is reduced by a term that
depends on $\zeta$. In our first set of experiments it holds that
$\zeta\sim 0$ and thus this term can actually be ignored
(nevertheless a more extensive set of experiments is planned to
thoroughly determine  $\zeta$).

}%%%%%%%%%%

%\input{analysis-assumptions.tex}

%%!TEX root = wir03.tex

\subsection{Instantiation of Our Protocol}
\label{sec:instantiation}

In this section, we will  instantiate our general protocol
for  physical layer key extraction with concrete parameters from the experimental data.

We first study how conditional entropy grows in a
block of samples with different size. By taking 10 samples as a
unit,  we slice the 8000 samples into 40 pieces, each representing
an individual experiment with 200 samples. We then calculate
conditional min-entropy (of 10 samples, 20 samples, and so forth
until 200 samples) using our methodology by averaging among these 40
experiments. In this way we get the plot of the average conditional
min-entropy with standard deviation of the 40 values as shown in
Figure~\ref{figure:linear-entropy}. We use the method of least
squares to get an approximation of the average case min-entropy as a
linear function of the number of samples and obtain the following
equation: $g(x)=\frac{985}{1000}x+\frac{1467}{1000},$ i.e.,
there is $g(x)$ bits average case min-entropy in every $x$ samples.

\begin{figure}
    \centering
    \includegraphics[width=3in,height=1.5in]{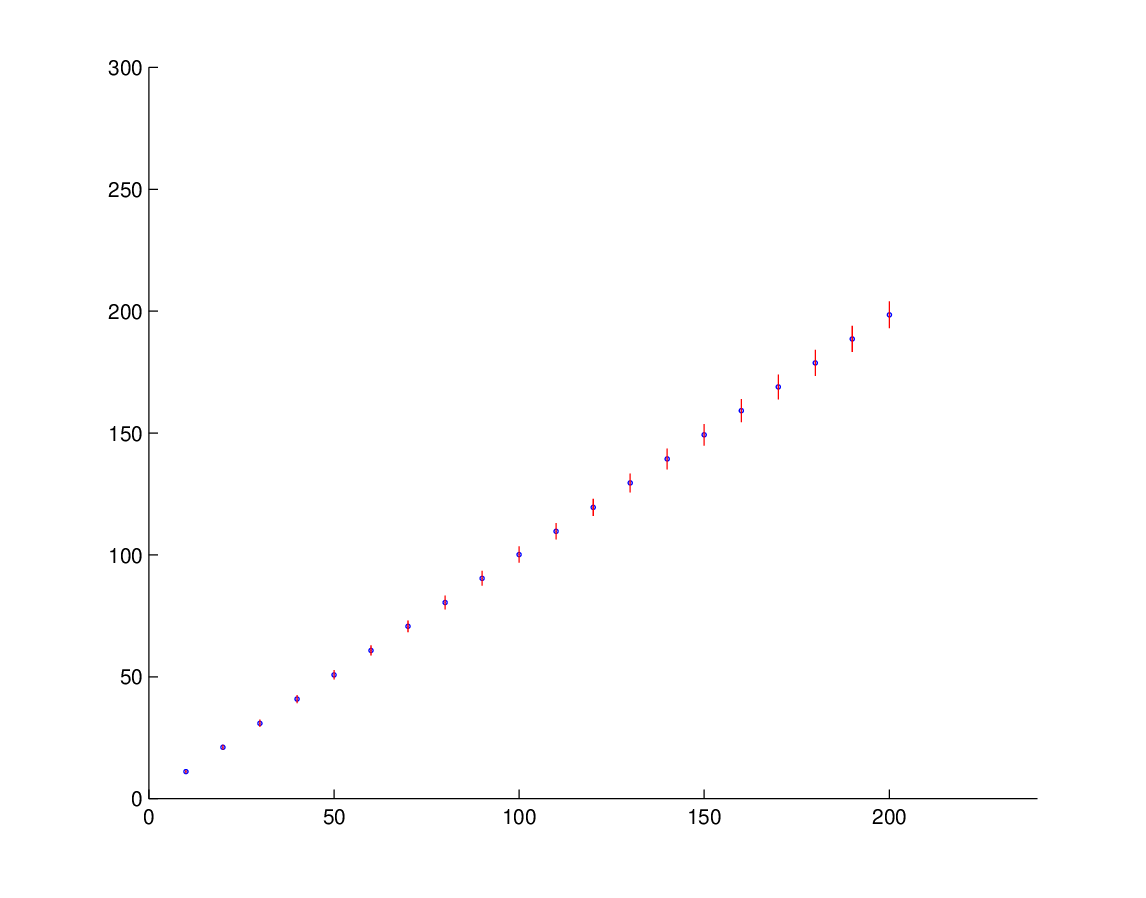}
    %\resizebox{3.25in}{!}{\includegraphics{node-layout}}
    \caption{Linear growth of conditional min-entropy for node4, x-axis is the
    number of samples in the sequence, y-axis is the value of conditional min-entropy,
    the (blue) points are average of conditional min-entropy in 40 experiments, the (red) lines start from
    mean-standard deviation and extend to mean+standard deviation.}
    \label{figure:linear-entropy}
\end{figure}

Similarly, we study the distribution of word (8-bit block) errors.
We first divide the bitstrings $\rho_A, \rho_B$ in  8-bit blocks,
and count the number of different blocks between $\rho_A,\rho_B$.
The result is presented in Figure~\ref{figure:linear-error}. In this
way we obtain an approximation of the average word error $e$:
$e(x)=\frac{43}{1000}x+\frac{48}{1000},$ which means there are
$e(x)$ word errors on average in $x$-sample experiment.
\begin{figure}
    \centering
    \includegraphics[width=3in,height=1.3in]{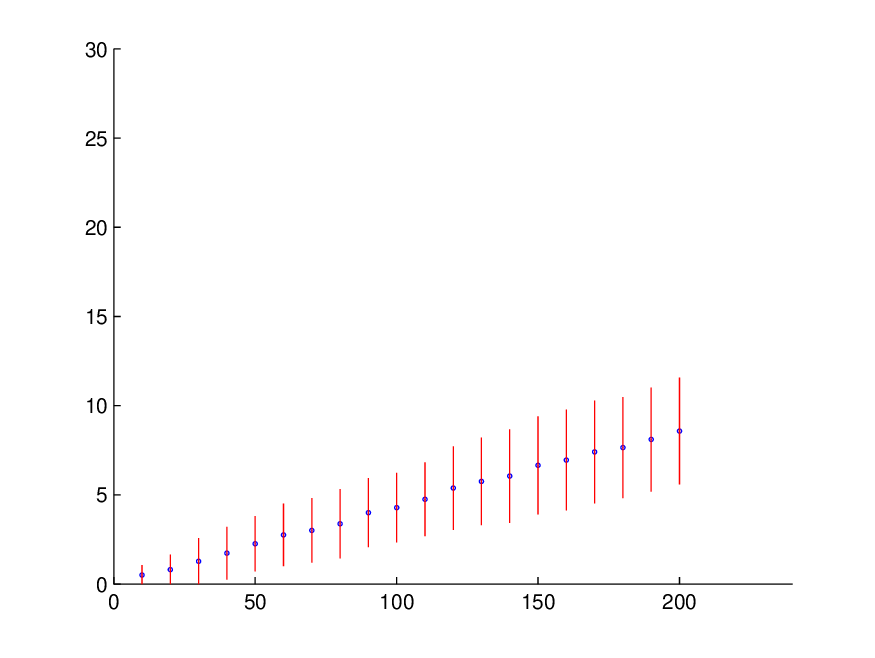}
    %\resizebox{3.25in}{!}{\includegraphics{node-layout}}
    \caption{Distribution of errors, x-axis is the number of
    samples, y-axis is the number of errors, the (blue) points are average
    among 40 experiments, the (red) lines start from mean-standard deviation to
    mean+standard deviation}
    \label{figure:linear-error}
    \vspace{-8mm}
\end{figure}

Our results experimentally show the linear growth of conditional
min-entropy and that of errors. We can now provide an explicit
instantiation of our protocol.

\begin{theorem}
Suppose $n$ is number of samples in our protocol. Using a
Reed-Solomon code over $F_{2^8}$ for the sketch, when $n\geq
\max\{12.54\lambda+6.27l-3.24, 2326c-1\}$, we can successfully
implement a $(l,e^{-c},2^{-\lambda})$ key generation system.
\end{theorem}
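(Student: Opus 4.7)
The plan is to derive the two linear constraints inside the $\max$ by separately discharging the security (uniformity) guarantee and the correctness guarantee, then intersecting them.

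Setup and security bound: Fix a Reed--Solomon code over $\ff_{2^8}$ of block length $n$ symbols correcting up to $d$ symbol errors; its parity-check matrix has $n-k = 2d$ symbols, so the syndrome sketch $u = \texttt{syn}(\rho_A)$ consists of $|u| = 16 d$ bits. Chaining four inequalities already established in the paper: (i) the embedding $\tau$ is injective, so bit quantization preserves entropy, i.e.\ $\widetilde{\mf H}_\infty(\rho_A \mid \mf Y) = \widetilde{\mf H}_\infty(\mf X \mid \mf Y)$; (ii) the linear-growth bound of Figure~\ref{figure:linear-entropy} gives $\widetilde{\mf H}_\infty(\mf X \mid \mf Y) \geq g(n) = 0.985\, n + 1.467$; (iii) the secure-sketch chain rule yields $\widetilde{\mf H}_\infty(\rho_A \mid \mf Y, u) \geq g(n) - 16 d$; and (iv) Lemma~\ref{lemma:leftover} with $\epsilon_u = 2^{-\lambda}$ requires $g(n) - 16 d \geq l + 2\lambda - 2$ to obtain an $l$-bit key within statistical distance $2^{-\lambda}$ of uniform. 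Setting $d$ to the smallest value that simultaneously satisfies the correctness constraint below, and then rearranging isolates the linear condition $n \geq 12.54\,\lambda + 6.27\, l - 3.24$.

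Correctness bound: Let $D$ denote the random number of word errors between $\rho_A$ and $\rho_B$. Figure~\ref{figure:linear-error} gives $E[D] \leq e(n) = 0.043\, n + 0.048$, and under the HMM channel abstraction the per-word discrepancies across samples are essentially independent Bernoulli trials. A multiplicative Chernoff bound then yields $\Pr[D > d] \leq e^{-c}$ whenever $d$ exceeds $e(n)$ by an appropriately tuned slack $\Delta(n,c)$. Calibrating $\Delta$ so that $d$ also satisfies the security inequality above translates into the sample-count requirement $n \geq 2326\, c - 1$. Combining the two constraints, $n \geq \max(12.54\,\lambda + 6.27\, l - 3.24,\ 2326\, c - 1)$ simultaneously guarantees that the Reed--Solomon decoder recovers $\rho_A$ with probability at least $1 - e^{-c}$ and that the universal-hash extractor output is within statistical distance $2^{-\lambda}$ of uniform over $\{0,1\}^l$, which is precisely the $(l, e^{-c}, 2^{-\lambda})$ key generation system demanded by the statement.

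The main obstacle is that the choice of $d$ couples the two bounds: shrinking $d$ leaves more entropy for extraction but weakens the Chernoff tail, while enlarging $d$ inverts the trade-off. The coefficient $6.27$ in the security bound is essentially the reciprocal of the per-sample entropy budget $0.985 - 16\alpha$ that survives after subtracting the asymptotic sketch rate $16 \alpha$, where $\alpha$ is the chosen per-sample error-correction fraction, and the coefficient $2326$ in the correctness bound is dictated by the Chernoff slack required so that this fixed $\alpha$ sufficiently exceeds the empirical word-error rate $0.043$. The technical heart of the proof is checking that a single value of $d$ (equivalently of $\alpha$) can be used consistently in both inequalities, so that the two linear conditions on $n$ are jointly realized with the stated constants.
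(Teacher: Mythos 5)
Your proposal follows essentially the same route as the paper's proof: entropy is preserved by the injective embedding, reduced by the $16d$-bit syndrome, fed into Lemma~\ref{lemma:leftover} with $\epsilon_u=2^{-\lambda}$ for the security constraint, and a multiplicative Chernoff bound on the word-error count gives the correctness constraint. The only detail you leave implicit is the paper's concrete choice $d=\frac{6}{5}e(n)$ (i.e.\ $\alpha=\frac{6}{5}\cdot 0.043$), which simultaneously produces the surviving per-sample entropy rate $0.985-16\alpha\approx 0.1594$ behind the coefficient $6.27$ and the Chernoff exponent $e(n)/100$ behind the coefficient $2326$.
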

\begin{IEEEproof}
Given the estimated conditional entropy rate as
$g(n)=\frac{985}{1000}n+\frac{1467}{1000}$, and error rate as
$e(n)=\frac{43}{1000}n+\frac{48}{1000}$, we choose a RS code over
$F_{2^{8}}$ which can correct up to $\frac{6}{5}e(n)$ errors (slice
the bitstring into proper sized blocks if needed). Thus, the entropy
loss of applying the code would be $8\cdot2\cdot\frac{6}{5}e$, the
entropy loss of applying extractor is $2\lambda$, and similar to the
analysis at the end of section~\ref{subsec:initial-entropy}, we
immediately get: if $159.4n+549.4>1000l+2000\lambda$, we will have a
$l$-bits key with $2^{-\lambda}$ distance to $U_l$. For correctness,
from the Chernoff bound, the probability of having $\frac{6}{5}e(n)$
errors is $e^{\frac{-e(n)}{100}}$, we can easily derive another
condition.
\end{IEEEproof}

To put the above into perspective,  if we set
$\epsilon_{c}=e^{-1},\epsilon_{u}=2^{-80}$,  we can do the
following: first we make around 1800 measurements, then use a 8-bit
unary encoding on the absolute value. (We take only 5-bit precision for the RSSI value.)
%(Since we take 5-bit precision, and in our
%experiments almost all (99.95\%) RSSI fall in [-8,0]. Also, we
%ignore the bit for sign, as they are all negative). 
Finally, we
employ a (255,229)-RS code for secure sketch and apply 2-universal
hash functions~\cite{CW79} as randomness extractor and harvest a
128-bit key. Note that the success rate of the exchange is at least
$1-\frac{1}{e}$ and is detectable by the parties hence in case
of failure the extraction is repeated.

%Given that we have an approximated lower bound for the initial
%conditional entropy in Alice's measurements we must now estimate how
%to deal with the reconciliation part and ensure the entropy loss is
%not too substantial to eliminate the key generation capability of
%the system.

%We  estimate the average number of 8-bit {\em word} errors between Alice
%and Bob's. This distance is suitable for applying a Reed-Solomon code on
%$\mathbf{F}_{2^8}$. The number of errors is shown in column 5 of
%Table~\ref{table:estimation}. We make no assumption of errors aggregating in a
%single word, also known as a burst error ( that would decrease the
%error-rate of the table --- still such bursts are plausible in our
% setting).

%AK:Not sure about the following:
%The fact that error rate is
%different for different adversarial nodes is simply because we
%ignore different measurements for Alice and Bob according to the
%packages different nodes lose.

%Now we are ready to provide an  instantiation of  our protocol from figure~\ref{fig:syndrome}:

\ignore{%%%%%%%%%%%%%%%%%%%%%%%%%%%%%%%%%%
%%%%%%%%%%%%%%%%old instantiation
Recall that using the sub-protocol described in
Section~\ref{subsec:experiment-setup}, Alice and Bob complete a
number of rounds in a {\tt ping}-like protocol. Each party will
utilize a 7800ms idle period after one round and will collect 2560
samples in total. The precision of measurements is fixed at 5 bits.
In our experiments, all channel measurements fall within [-8,0] with
very high probability (larger than 99.95\%), and thus Alice and Bob
can use a 8-bit unary encoding to implement the low distortion
embedding. Specifically, they encode a measurement $x$ to $8-|x|$
bits $1$s and then $|x|$ bits $0$s (note that we drop the first bit
for $sign(x)$ needed since almost all sample values are
non-positive, and it is easy to add this bit without effectively
changing the statistics in Table \ref{table:estimation}). After the
quantization, both Alice and Bob obtain a 20480 bits string.

Next, Alice sends four syndromes of a (255,230)-Reed-Solomon code
and a random seed $s$ to Bob. To implement the strong extractor,
they use a universal hash function $h:\{1,.,M\}\rightarrow
\{0,1\}^c$ as $h_{a,b}(x)=(ax+b)\mod p$, where $M$ is a large
pre-determined number, $p$ is a public prime number larger than $M$
and $a,b\in Z_p^{*}$. We assume that both Alice and Bob are
initialized with these public parameters. The seed $s$ sent by Alice
is split to determine $a,b$, each of $\log_2 p$ bits. The final
output of the universal hash function will be used as the shared key
(after being mapped to a bitstring).

We observe from Table~\ref{table:estimation} that the error rate is
lower than 0.55\% and the entropy rate is higher than 11.05\%. For
every 2048 bits, there are about 11.2 word errors and 226 bits
entropy. Using a (255,230)-Reed-Solomon code we can correct up to 12
word errors,   with entropy loss no more than 200 bits. Alice and
Bob have 20480 bits so they need to repeat the experiment of
collecting 100 samples 26 times independently. Inclusively there is
a total of 2260 bits entropy conditioned on what Eve sees. The four
syndromes Alice sent have entropy loss at most 2000 bits which are
total amount of bits transmission for reconciliation. Observe that
there is 260 bits entropy left after reconciliation. If we set
$\epsilon_{u}=80$, after we apply the strong extractor, our protocol
ends up with a 104 bits key, with a statistical distance of
$2^{-80}$ to a random bit string with length 100.
}%%%%%%%%%%%%%%%%%%%%%%%%%%%%%%%%%

\ignore{%%%%%%%%%%%%%%%%%%%
%%%%%%%%%%%%% time of completing a key agreement
Note that in our experimental setup this would be achieved with 256
samples and utilizing the 7800ms delay it would require about 33
minutes to extract the key (recall that the purpose of the delay is
to de-correlate successful measurements so that our security
analysis applies).
}%%%%%%%%%%%%%%%%%%%%%5

\subsection{Justification of Channel Assumptions}
\label{subsec:channel-assumptions}

We now show that  all channel assumptions stated in
Section~\ref{sec:learning} hold  based on the
analysis of the experimental data collected in the test environment
described in Section~\ref{subsec:experiment-setup}.

%\medskip
\noindent{\bf Stationary Memoryless Markov Process
Assumption:} We split assumption~\ref{assumption:markov} to three
sub-assumptions. Let us begin with the assumption that every channel
measurement $X_i$ depends only on the previous measurement $X_{i-1}$
for all $i$. As mentioned in our description of our test platform,
we used an artificial delay of 1300ms between ping frames in order
to induce independence between channel samples greater than one
sample apart. To evaluate this assumption of independence we
performed the following analysis. We selected some $i \in [1+m,n]$
uniformly at random, where $n$ is the number of samples in a node's
measurements, and $m = 500$. We then created a $(m+1)$-length row
vector as $\{X_{i}, X_{i-1}, \ldots, X_{i-m}\}$, where $X_{i-m}$ is
the $(i-m)$-th sample from a node's channel measurements. This was
repeated 10,000 times, concatenating each of the 10,000 row vectors
together to create a $10000 \times (m+1)$ matrix $A$.

The Pearson's linear correlation coefficients of the resulting
column vectors in $A$ were then computed to find the correlation of
$X_i$ with $X_{i-k}$ for each $k \in [1,500]$. From the correlation
analysis of our experimental data, we found that there exists a
statistically  correlation between samples $X_i$ and
$X_{i-k}$ at a significance level of $\alpha = 0.05$ for values of
$k$ up to $6$. \footnote{This clearly shows that previous idealized assumptions that the signals are independently distributed are not precise.} Since we want to ensure that $X_i$ and $X_{i-k}$ are
independent for $k > 1$ for our channel assumptions to hold, we
further downsample the collected data by a factor of six in order to
induce this level of independence (i.e., ensuring at least 7800ms
between samples). The remainder of our experimental analysis is
performed on this downsampled data. 

%The correlation among signals
%under downsampled data is shown in Figure~\ref{fig:downsample}.

%\begin{figure}
%    \centering
%    \includegraphics[width=2.8in, height=1.3in]{downsampled-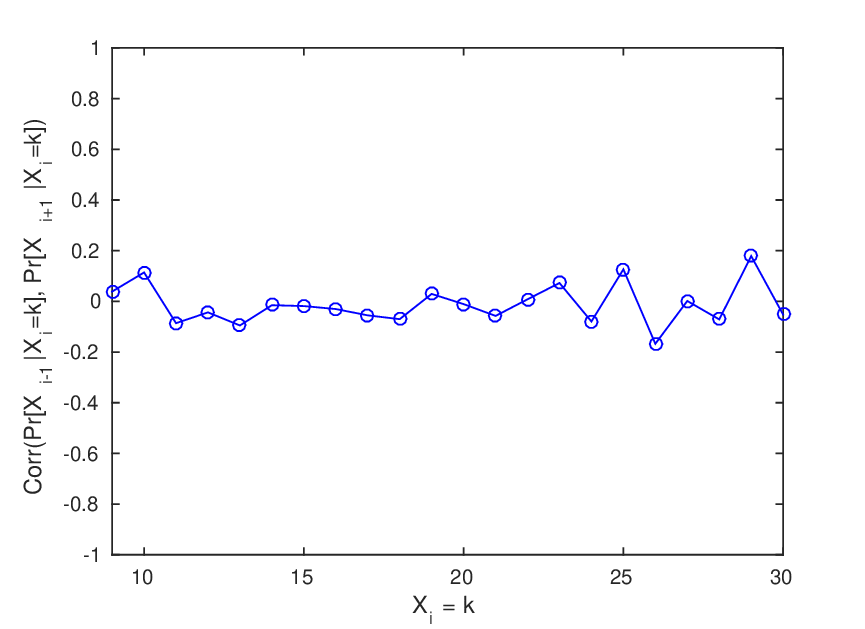}
%    %\resizebox{3.25in}{!}{\includegraphics{node-layout}}
%    \caption{Correlation among signals in the downsampled data}
%    \label{fig:downsample}
%\end{figure}

%2. Take every $X_i$ as different random variable, they follow
%the identical distribution
\ignore{%%%%%%%%%%%%%%%%%%%%%%%%%
!!!!!!!!!!!!!!\emph{keep justification for $X_i$ follow identical
distribution} }%%%%%%%%%%%%%%%%%%%%%%

\ignore{
Next, we show that if we treat every $X_i$ as a different random
variable, they each follow an identical distribution. For this
evaluation, we employ the two-sample Kolmogorov-Smirnov (K-S) test \cite{kstest}
which accepts as input two input vectors $\vec{x}_1$ and $\vec{x}_2$
and evaluates the null hypothesis that $\vec{x}_1$ and $\vec{x}_2$
are from the same distribution against the alternative hypothesis
that $\vec{x}_1$ and $\vec{x}_2$ are from different distributions (without making extra assumption about the data sets). 
We randomly partitioned each node's set of channel measurements $X$
into two non-overlapping vectors $\vec{x}_1$ and $\vec{x}_2$, each
of length $|\vec{x}_1| = |\vec{x}_2| = \lfloor n/2 \rfloor$ where $n
= |X|$ is the total number of channel measurements. We then
performed a two-sample K-S test on $\vec{x}_1$ and $\vec{x}_2$ again
using a significance level of $\alpha = 0.05$. This process was
repeated 10,000 times each for Alice's and Bob's measurements (i.e.,
a total of 20,000 trials). Of those 20,000 trials, we had to reject
the null hypothesis that $\vec{x}_1$ and $\vec{x}_2$ are identically
distributed in $0.54\%$ of the trials. While we did reject the null
hypothesis in a non-zero percentage of trials, the results still
lend strong support  to the assumption that each node's channel
samples $X_i$ are identically distributed throughout each
experiment.
}

\ignore{%%%%%%%%%%%%%%%%%%%%
The same correlation analysis after downsampling the data is shown in
Figure~\ref{fig:sample-correlation}(b). From those results we can conclude
that, in our test environment, the assumption that samples $X_i$ and $X_{i-
k}$ are statistically independent holds true when $k > 1$ and the separation
between $X_i$ and $X_{i \pm 1}$ is greater than 7.8 seconds.

[!!!!!!! AK : \emph{ here we should argue that assumption 1(b)
holds; what is stated is different but we may keep it as a warm
up.}] }%%%%%%%%%%%%%%%%%%%%%%

\begin{figure}
    \centering
    \includegraphics[width=3in, height=1.5in]{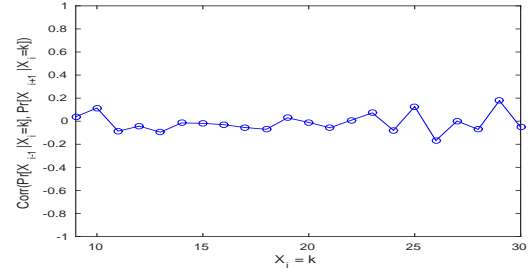}
    \caption{Correlation between $\Pr[X_{i-1}|X_i=k]$ and $\Pr[X_{i+1}|X_i=k]$
             for channel sample values $k \in [9,29]$.}
    \label{fig:correlation}
    \vspace{-6mm}
\end{figure}

We further show that for all channel samples $X_i = k$, where $k$ is a 5-bit
RSSI value, that $\Pr[X_{i-1}|X_i=k]$ and $\Pr[X_{i+1}|X_i=k]$ are uncorrelated.
For every value of $k \in [0,32]$, we derive the distributions
$\Pr[X_{i-1}|X_i=k]$ and $\Pr[X_{i+1}|X_i=k]$ and again compute the Pearson's
linear correlation coefficient between the two vectors. In our experimental
dataset, value of $k < 9$ and $k > 29$ were infrequent and so we were unable to
compute statistically significant correlation coefficients for those values of
$k$. The remaining values of $k \in [9,29]$, however, constitute approximately
95\% of our channel samples. The correlation coefficients between the two 
distributions for each value of $k \in [9,29]$ are shown in
Figure~\ref{fig:correlation}. Our results show that $\Pr[X_{i-1}|X_i=k]$ and 
$\Pr[X_{i+1}|X_i=k]$ are uncorrelated for values of $k$ that have a sufficient
number of samples (approximately 50 or more).

%%%%%%%%%%%K-S test%%%%%%%
\ignore{%%%%%%K-S test
Finally, we consider the assumption that the Markov process
transition probability is stationary. That is, we want to show that
the equality $\Pr[X_i = x_1|X_{i-1} = x_0] = \Pr[X_j = x_1 |
X_{j-1} = x_0]$ holds for any $i$ and $j$ with $i \neq j$. We
randomly selected some $i$ and $j$ in $[1,N-1]$ uniformly at random
and append the channel sample $X_{i+1}$ to a vector $\vec{x}_1$ and
$X_{j+1}$ to a vector $\vec{x}_2$. This was repeated 10,000 times to
create two 10,000-element vectors. We then applied the two-sample
K-S test to $\vec{x}_1$ and $\vec{x}_2$ to determine whether the
sample distributions are identical. We found that in every test case
the resulting distributions are identical at a
confidence level of $\alpha = 0.05$.
}%%%%%K-S test
%%%%%%%%%%%%%%%%%%%%%

\begin{figure}
    \centering
    \includegraphics[width=3in, height=1.5in]{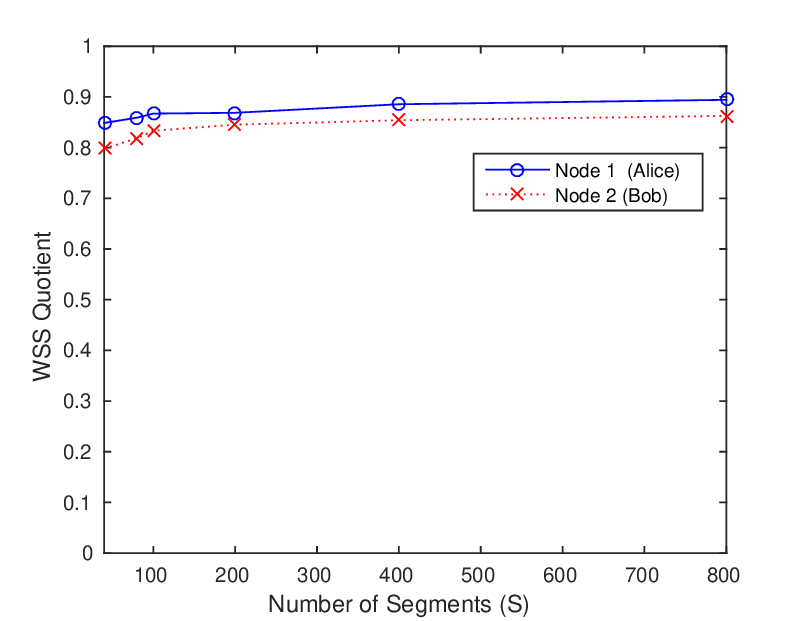}
    \caption{Computed WSS quotient values for varying segment sizes.}
    \label{fig:wss}
    \vspace{-6mm}
\end{figure}

Finally, we consider the assumption that the Markov process
transition probability is stationary. That is, we want to show that: $\Pr[X_i = x_1|X_{i-1} = x_0] = \Pr[X_j = x_1 |
X_{j-1} = x_0],$ holds for any $i$ and $j$ with $i \neq j$. For this
test, we use the wide-sense stationarity (WSS) quotient~\cite{SSO07}
to evaluate whether the first and second moments of our experimental
dataset. We computed the WSS quotient for multiple segment sizes ($S$)
and provide the results in Figure~\ref{fig:wss}. Our results show a
high level of stationarity for the first and second moments of our
sample data across all tested values of $S$. We also performed  a series of K-S tests with similar results. 

%\medskip
\noindent{\bf Stationary Observation Probability Assumption:} Next,
we consider our assumption~\ref{assumption:stationary-transit} that
the channel sample observation probability is stationary.
Specifically, we want to show that $\Pr[Y_i = y | X_i = x] = \Pr[Y_j
= y | X_j = x]$ holds true for any two sample indices $i$ and $j$
with $i \neq j$, where $x$ and $y$ indicate concrete RSSI values
observed by Alice and Eve, respectively. We can easily find the
correlation between $Y_i$ and $X_{i}$ for $k \in [-500,500]$ for
each Eve node using the same approach based on computing the linear
Pearson's correlation coefficient that we described above for
evaluating the first assumption. The results were averaged across
all five eavesdropper nodes. We found that an eavesdropper
measurement $Y_i$ was statistically independent from some sample
$X_j$ collected by Alice for $i\neq j$ at a significance level of
$\alpha = 0.05$.

We use a technique similar to those above to evaluate whether
$\Pr[Y_i | X_i] = \Pr[Y_j | X_j]$ for any two sample indices $i$ and
$j$ with $i \neq j$). We first partition the set of sample indices
$I = \{1,2,\ldots,N\}$ into two equal-sized random subsets
$I^\prime_1$ and $I^\prime_2$ such that $|I^\prime_1| = |I^\prime_2|
= \lfloor n/2 \rfloor$. Eve's channel measurement set $Y$ was
similarly partitioned into two subsets $Y^\prime_1$ and
$Y^\prime_2$, such that $Y^\prime_1$ contains the measurements in
$Y$ corresponding to the channel sample indices in $I^\prime_1$ and
$Y^\prime_2$ contains the measurements in $Y$ corresponding to the
channel sample indices in $I^\prime_2$. Alice's channel measurements
were partitioned into two subsets $X^\prime_1$ and $X^\prime_2$ in
the same manner. We then compute the joint distributions between
$X^\prime_1$ and $Y^\prime_1$, as well as between $X^\prime_2$ and
$Y^\prime_2$. For each RSSI value $x \in [-24,0]$ (the range of
possible RSSI values in our experimental platform), we use a
two-sample K-S test to determine whether the two distributions
$\Pr[Y^\prime_1|X^\prime_1 = x]$ and $\Pr[Y^\prime_2|X^\prime_2 =
x]$ are identical. This process was repeated a total of 10,000 times
for each of the five eavesdropper sets $Y$ (i.e., a total of 50,000
trials). Out of the 50,000 trials, we accepted the null hypothesis
that the two distributions were identical 100\% of the time with a
significance level of $\alpha = 0.05$ and so we conclude that the
assumption $\Pr[Y_i|X_i] = \Pr[Y_j|X_j]$ holds true when $i \neq j$.

%\medskip
\noindent{\bf Independent Observation Assumption:} This states that
the dependency that potentially exists between the adversary's
measurements comes strictly from the existing dependency of the
corresponding measurements of Alice and not any other source. Note
that  from assumption~\ref{assumption:stationary-transit}, each
$Y_i$ follows a distribution that depends only on the value of $X_i$
and this dependency does not change over time. This suggests that
there exists a probabilistic function $F$ over $X_i$ which can
simulate the observations of Eve, i.e., $Y_{i}=F(X_{i})$. In such
case we can prove the independent observation assumption:
\[
\begin{aligned}
&\Pr[Y_{1}=y_{1},\ldots,Y_{n}=y_{n}|x_{1},\ldots,x_{n}]\\
=&\Pr[F(X_{1})=y_{1},\ldots,F(X_{n})=y_{n}|x_1,\ldots,x_n]\\
=&\Pr[F(x_{1})=y_{1},\ldots,F(x_{n})=y_{n}]\\
=&\prod_{i=1}^{n}\Pr[F(x_{i})=y_i]=\prod_{i=1}^{n}\Pr[F(X_{i})=y_{i}|X_{i}=x_{i}]\\
=&\prod_{i=1}^{n}\Pr[Y_{i}=y_i|X_i=x_{i}]
\end{aligned}
\vspace{-2mm}
\]

%\medskip
\noindent{\bf Stable Conditional Min-Entropy Assumption}: The
intuition for assumption~\ref{assumption:stationary-entropy} is that
even though at some location, some observations may have better
advantage to predict the original measurement, the sequence would be
long enough, and thus when averaging across the whole sequence, the
advantage will diminish. To experimentally justify this assumption,
we calculate the min-entropy of the sequence received by node1
conditioned on node4's observation in each experiment (represented
by each 100-sample block). We observe that it is stably distributed
as shown in Figure~\ref{figure:stable-entropy}.
\begin{figure}
    \centering
    \includegraphics[width=2.8in,height=1.3in]{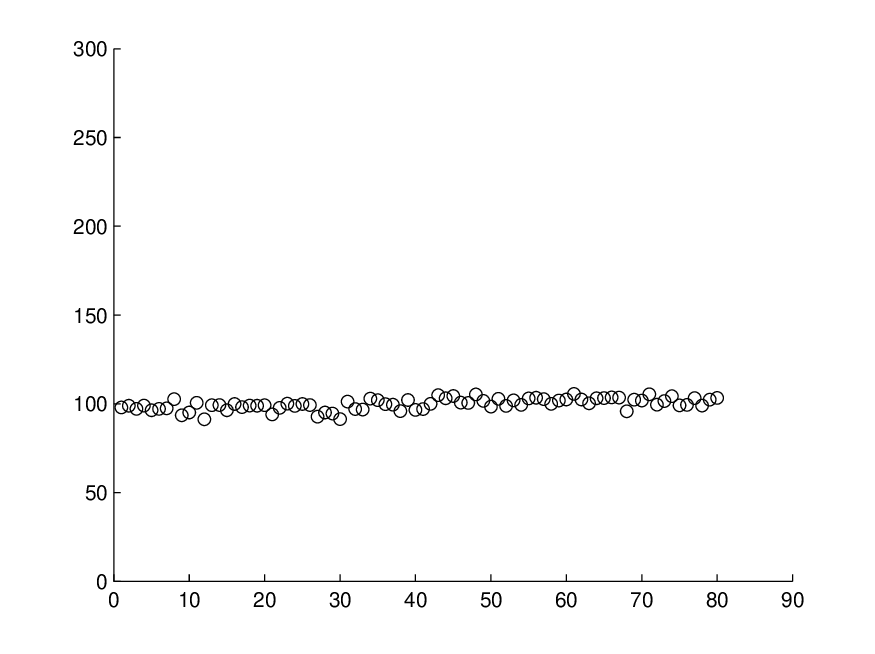}
    %\resizebox{3.25in}{!}{\includegraphics{node-layout}}
    \caption{Distribution of conditional min-entropy in node4, x-axis is the index of
    experiment, y-axis is the value of conditional min-entropy,
    300 is the maximum entropy contained in the 100 samples}
    \label{figure:stable-entropy}
    \vspace{-7mm}
\end{figure}

%Thus  assumption \ref{assumption:independent-observation} is  justified.

%%!TEX root = wir03.tex

\section{Conclusions and Future work}
\label{sec:conclusions}

%In this paper we defined, analyzed and presented our results from the design
% and implementation of an efficient physical-layer key extraction
%protocol that utilizes  correlated
%measurements of the wireless signal envelope between two
%communicating parties.

%Our approach yields several advantages over previous work in that it
%allows for a rigorous experimental security analysis that guarantees
%the agreed key has reasonable level of security. Furthermore, our
%protocol reduces the level of interactivity for bit quantization and
%key reconciliation and does not depend on specialized hardware.

In this paper we presented a framework for a rigorous experimental
security analysis that guarantees a reasonable level of security about the agreed key extracted from
physical quantities, and we show an
application of the framework to key extraction from
wireless signals with concrete parameters based on our
experimental data. Our novel methodology %for the security analysis of physical-layer key extraction
relies on HMM model and the estimation of
conditional min-entropy through a dynamic programming approach that
relies on the Viterbi and forward algorithms.
Our work lays   a comprehensive
methodology for  arguing experimentally the security of
physical-layer key extraction against a
  passive eavesdropper nodes and can be applied to a number of other similar
protocols in a similar way as we demonstrated here.
At a very high level, our methodology entails the following
basic steps, (i) employ a falsifiable channel abstraction,
equipped with an  efficient
conditional entropy approximation algorithm
and the execution of experiments that estimate the conditional min-entropy using the algorithm (ii)
justify experimentally
the channel abstraction, (iii) account for the entropy loss incurred due
to quantization and  information reconciliation.

While the above is a step forward in the
security analysis of physical-layer key extraction there is 
still limited evidence for the security of these protocols in general settings;
 multitude of open questions remain and  our methodology can be
extended in a number of directions: (i) increase the size of the HMM
or even substitute the HMM with more general Bayesian networks to
improve the accuracy of the  estimation of conditional min-entropy
and potentially  decrease the time needed for key generation.
(ii) consider the case of active adversaries and develop protocols
with adversarial interference detection for which
conditional min entropy can still be effectively estimated. (iii)
consider observations from multiple eavesdropper nodes
simultaneously.

%This is work in progress. In our next step we will fully investigate
%the exact values of the leakage advantage $\zeta$ and consider
%various experimental setups to see in what circumstances this probability
%becomes non-trivial. We stress that our methodology can still yield positive
%results in the setting that $\zeta$ is bounded away from $0$.

\bibliographystyle{IEEEtran}

%\bibliography{IEEEabrv,ref}

%\bibliographystyle{abbrv}
\bibliography{ref}

\end{document}